\newcommand{\be}{\begin{equation}}
\newcommand{\ee}{\end{equation}}
\newcommand{\ba}{\begin{array}}
\newcommand{\ea}{\end{array}}
\newcommand{\bea}{\begin{eqnarray}}
\newcommand{\eea}{\end{eqnarray}}
\newcommand{\calH}{{\cal H }}
\newcommand{\calS}{{\cal S }}
\newcommand{\calQ}{{\cal Q }}
\newcommand{\calC}{{\cal C }}
\newcommand{\calN}{{\cal N }}
\newcommand{\calZ}{{\cal Z }}
\newcommand{\la}{\langle}
\newcommand{\ra}{\rangle}
\newcommand{\prob}{{\mathrm Pr}}
\newcommand{\nn}{\nonumber}
\newcommand{\yes}{\mbox{\small \it yes}}
\newcommand{\no}{\mbox{\small \it no}}
\newcommand{\unsat}[1]{\mathrm{unsat}{(#1)}}
\newcommand{\trace}{\mathrm{Tr}}
\title{Complexity of stoquastic frustration-free  Hamiltonians}
\author{Sergey Bravyi\thanks{IBM  Watson Research Center, Yorktown Heights, NY 10598, USA.
{\tt sbravyi@us.ibm.com}} \and  Barbara Terhal\thanks{IBM  Watson
Research Center, Yorktown Heights, NY 10598, USA. {\tt
terhal@watson.ibm.com}}}
\begin{document}

\maketitle

\begin{abstract}
We study several problems related to properties of non-negative matrices that arise
at the boundary between quantum and classical probabilistic computation.
Our results are twofold.
First, we identify a large class of quantum Hamiltonians describing
systems of qubits for which the adiabatic evolution can be
efficiently simulated on  a classical probabilistic computer. These
are  stoquastic local Hamiltonians with a ``frustration free"
ground-state. A Hamiltonian belongs to this class iff it can be
represented as $H=\sum_a H_a$ where (1)  every term $H_a$ acts
non-trivially on a constant number of qubits,
 (2)  every term $H_a$ has real non-positive off-diagonal matrix elements
in the standard basis, and (3) the ground-state of $H$ is a
ground-state of every term $H_a$. Secondly, we  generalize the
Cook-Levin theorem proving NP-completeness of the satisfiability
problem to the complexity class MA  --- a probabilistic analogue of
NP. Specifically, we construct a quantum version of the $k$-SAT
problem which we call  ``stoquastic $k$-SAT" such that stoquastic
$k$-SAT is contained in MA for any constant $k$, and any promise
problem in MA is Karp-reducible to stoquastic $6$-SAT. This result
provides the first non-trivial example of a MA-complete promise
problem.
\end{abstract}

\begin{keywords}
Adiabatic Quantum Computing, Non-Negative Matrices, Randomized
Algorithms, Merlin-Arthur Games
\end{keywords}

\begin{AMS}
68Q15, 68Q17, 68W20
\end{AMS}

\pagestyle{myheadings}
\thispagestyle{plain}
\markboth{S. BRAVYI AND B. TERHAL}{Complexity of stoquastic frustration-free Hamiltonians}

\section{Introduction}
Recent years have seen the first steps in the development of a
quantum or a matrix-valued complexity theory. Such complexity theory
is interesting for a variety of reasons. Firstly, it increases our understanding of fundamental limitations imposed
on computational devices by the laws of physics.
Secondly, since quantum computation is an extension of classical computation, quantum
complexity theory provides a framework and a new angle
to attack major problems in classical complexity theory, see for instance~\cite{AaronsonPP}.

The problems for which a solution can be efficiently {\em found} on a quantum computer
constitute the class BQP --- the quantum analogue of the classical class BPP.
On the other hand, the problems for which a solution can be efficiently {\em verified}
on a quantum computer constitute the class QMA --- the quantum analogue of the class MA.
It was realized recently  that one can learn a lot about the classes
BQP and QMA by studying ground-state properties of {\em local Hamiltonians}.
Firstly, Aharonov et al.~\cite{ADKLLR:04} proved  that {\em
any} quantum algorithm can be executed via quantum  adiabatic evolution  in which parameters of a local Hamiltonian are changed adiabatically
while the state of the quantum computer is encoded into the
instantaneous ground-state.
Secondly, it was shown by Kitaev~\cite{KSV:computation} that
determining the ground-state energy of a local Hamiltonian is a
problem complete for the class QMA, see~\cite{KKR:06,OT:05,AGK:07}
for some recent progress.

In this paper we present a large family of local Hamiltonians for
which quantum adiabatic evolution can be efficiently simulated by a
classical probabilistic algorithm while determining the ground-state
energy is a problem complete for the class MA (considered as a class
of promise problems). To introduce this family of Hamiltonians let
us start from setting  up some terminology. The Hilbert space of $n$
qubits equipped with the standard basis $\{|x\ra\}$, $x\in
\Sigma^n$, will be denoted as $\calQ^n$. Here and below we denote as
$\Sigma^n=\{0,1\}^n$ the set of $n$-bit binary strings.

\begin{definition}
A $k$-local Hamiltonian acting on $n$ qubits is a Hermitian operator
$H$ on a Hilbert space $\calQ^n$ representable as $H=\sum_{a=1}^M
H_a$, where every term $H_a$ acts non-trivially only on some subset
of $k$ or less qubits. We shall consider families of $k$-local
Hamiltonians in which $k=O(1)$, $M\le poly(n)$, and $\|H_a\|\le
poly(n)$.
\end{definition}

We shall often use the terms $k$-local Hamiltonian and local Hamiltonian interchangeably.

\begin{definition}
A local Hamiltonian $H=\sum_{a} H_a$ is called frustration-free if
$H_a$ are positive semi-definite operators and the ground-state of
$H$ is a zero eigenvector of all operators  $H_a$.
\end{definition}

\begin{definition}
A local Hamiltonian $H=\sum_{a} H_a$ is called stoquastic with respect to a basis
$\cal B$ iff all $H_a$ have real non-positive off-diagonal matrix elements in the basis $\cal B$.
\end{definition}

Throughout this paper we shall consider Hamiltonians that are
stoquastic with respect to the standard basis of $n$ qubits formed
by tensor products of $|0\ra$ and $|1\ra$ states. The term
``stoquastic" was introduced in~\cite{BDOT:06} to emphasize a
connection with both stochastic matrices and quantum Hamiltonians.
The problem of determining the ground-state energy of a stoquastic
local Hamiltonian is contained in the complexity class AM
(Arthur-Merlin games), see~\cite{BDOT:06} for details.

In the present paper we shall focus on stoquastic frustration-free
(SFF) Hamiltonians. Some examples of SFF Hamiltonians will be given
in Section~\ref{sec:examples}.
These examples demonstrate that SFF Hamiltonians arise naturally at
the boundary between quantum computation and classical probabilistic
computation or, in physics, at the boundary between classical
statistical mechanics and quantum mechanics.

\subsection{Summary of results}

Our first result concerns the computational power of adiabatic
quantum evolution with SFF Hamiltonians. Let $H_{in}$ and $H_{f}$ be
SFF Hamiltonians acting on $n$ qubits. We assume that $H_{in}$ and
$H_{f}$ can be connected by an adiabatic path $H(s)$, $0\le s\le 1$,
such that $H_{in}=H(0)$, $H_{f}=H(1)$.
In addition the following conditions should be met

\noindent {\bf
(A0)} Hamiltonians $H(s)$ are stoquastic and frustration-free for
all $s$.

\noindent {\bf (A1)} The path is sufficiently smooth; $J=\max_s \|
d{H(s)} / ds\|\le poly(n)$.

\noindent {\bf (A2)} The Hamiltonian $H(s)$ has a non-degenerate
ground-state for all $s$. The spectral gap $\Delta(s)$  between the
smallest and the second smallest eigenvalues of $H(s)$ is
sufficiently large: $\Delta=\min_s \Delta(s) \ge 1/poly(n)$.

\noindent {\bf (A3)} The initial Hamiltonian is sufficiently simple
so there exists a $poly(n)$ algorithm that finds a basis vector
$|x\ra$ such that the overlap between $|x\ra$ and the ground-state
of $H_{in}$ is at least $2^{-poly(n)}$.

In contrast to the standard paradigm of adiabatic quantum
computation or quantum annealing we do not require  the adiabatic
path to be  a linear interpolation between $H_{in}$ and $H_f$ (since
otherwise it may be impossible to fulfill the frustration-free
condition). The goal of the simulation is to sample $x\in \{0,1\}^n$
from the probability distribution $\pi(x)$ associated with the
ground-state $|\psi\ra$ of $H_{f}$, that is, $\pi(x)=|\la
x|\psi\ra|^2$ (assuming that $\la \psi|\psi\ra=1$).  Our first
result is as follows.
\begin{theorem}
\label{thm:adiabatic} Let $|\psi\ra$ be the ground-state of $H_{f}$
and $\pi(x)=|\la x|\psi\ra|^2$. Suppose the adiabatic evolution
conditions (A0)-(A3) are met. Then for any precision $\delta>0$
there exists a classical probabilistic algorithm that generates a
random variable $x\in \{0,1\}^n$ with a probability distribution
$\tilde{\pi}(x)$ such that $\|\tilde{\pi}-\pi\|_1\le \delta$. The
running time of the algorithm is $poly(n,\delta^{-1})$.
 \end{theorem}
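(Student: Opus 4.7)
The plan is to reduce the sampling problem to classical Markov-chain Monte Carlo along the adiabatic path, exploiting stoquasticity, frustration-freeness, and adiabatic smoothness in sequence.

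First, I would show that the ground state $|\psi_s\ra$ of $H(s)$ can be chosen with non-negative amplitudes $\psi_s(x)\ge 0$ in the standard basis, so that $\pi_s(x)=\psi_s(x)^2$ is an honest probability distribution on $\Sigma^n$. This follows by applying Perron--Frobenius to the entrywise non-negative matrix $G(s)=I-\eta H(s)$ for a fixed $\eta>0$ small enough that all diagonal entries of $G(s)$ are non-negative: frustration-freeness forces $H(s)|\psi_s\ra=0$, so $|\psi_s\ra$ is a Perron eigenvector of $G(s)$, and (A2) gives uniqueness.

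Second, I would associate to each $H(s)$ a reversible Markov chain $P_s$ on $\Sigma^n$ with stationary distribution $\pi_s$. A natural choice is $P_s(x,y)=\eta\,|\la x|H(s)|y\ra|\,\psi_s(y)/\psi_s(x)$ for $x\neq y$, with diagonal entries chosen to make each row sum to $1$; the row-sum identity holds precisely because $H(s)|\psi_s\ra=0$. Hermiticity of $H(s)$ gives detailed balance with respect to $\pi_s$, and the similarity $P_s=\mathrm{diag}(\psi_s)^{-1}\,G(s)\,\mathrm{diag}(\psi_s)$ combined with (A2) shows that the spectral gap of $P_s$ is at least $\eta\Delta(s)\ge 1/poly(n)$. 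Locality of $H(s)$ keeps $P_s$ sparse, with $poly(n)$ non-zero entries per row.

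Third, I would run an annealing algorithm. Discretize the path into $L=poly(n)$ steps $0=s_0<s_1<\cdots<s_L=1$. First-order perturbation theory for the non-degenerate ground state, together with (A1)--(A2), yields $\|\pi_{s_{i+1}}-\pi_{s_i}\|_1=O(J/(\Delta L))$, which can be made $1/poly(n)$ for a suitable $L$. Use (A3) to seed the process at $s_0$ with a basis state of inverse-exponential overlap with $|\psi_{s_0}\ra$, and run $P_{s_0}$ for $poly(n)$ steps to amplify this into a warm start for $\pi_{s_0}$ (the logarithm of the initial $\chi^2$-divergence is $poly(n)$, and each step contracts by $1-1/poly(n)$). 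At each subsequent step, take the current sample and run $P_{s_i}$ for $poly(n)$ steps; standard warm-start mixing bounds the per-step total-variation error, and summing over $L$ steps yields total error at most $\delta$ in time $poly(n,\delta^{-1})$.

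The main obstacle is that the chain $P_s$ defined above uses ratios $\psi_s(y)/\psi_s(x)$ that are not directly accessible to the algorithm. The hardest technical step will be to replace $P_s$ by an efficiently simulable proxy --- for instance, a Metropolis-style chain whose acceptance probabilities are driven by unbiased stochastic estimators of these ratios, obtained by short power-iteration runs of $G(s)$ seeded at $|x\ra$ and $|y\ra$ and exploiting the frustration-free structure to kill the excited-state contributions quickly. Controlling the bias and variance of such estimators, and showing that the resulting approximate chain still has stationary distribution close to $\pi_s$ and mixes in polynomial time with a warm start, is the real crux of the argument.
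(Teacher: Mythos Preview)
Your overall architecture---non-negative ground states via Perron--Frobenius, the ground-state random walk $P_s$ obtained by conjugating $G(s)=I-\eta H(s)$ by $\mathrm{diag}(\psi_s)$, and simulated annealing along the discretized adiabatic path with warm starts---matches the paper's approach closely. But you have misidentified where the hard step lies. The amplitude ratios $\psi_s(y)/\psi_s(x)$ that appear in $P_s$ are \emph{directly and exactly computable} from the local data, and this is precisely the place where frustration-freeness does real work (beyond merely ensuring $H(s)|\psi_s\ra=0$). The key lemma is this: the ground-space projector $\Pi_a$ of each local term $H_a$ is an entrywise non-negative matrix and decomposes as a direct sum of rank-one non-negative projectors; from this one shows that whenever $\la y|H_a|x\ra<0$ for some $a$ (the only way to have $P_s(x,y)>0$ for $x\neq y$), both $x$ and $y$ lie in the same rank-one block of $\Pi_a$, and
\[
\frac{\psi_s(y)}{\psi_s(x)}=\sqrt{\frac{\la y|\Pi_a|y\ra}{\la x|\Pi_a|x\ra}}.
\]
Since $\Pi_a$ acts on $O(1)$ qubits, the right-hand side is computable in constant time. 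This makes $P_s$ exactly and efficiently simulable, with no estimators needed.

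Your proposed workaround---stochastic estimators of the ratios via power iteration of $G(s)$, fed into a Metropolis filter---is therefore unnecessary, and also dubious as stated: estimating $\la z|G(s)^L|x\ra$ stochastically runs into the same unknown-ratio problem (this is essentially the normalization issue in Green's function Monte Carlo, which the paper explicitly contrasts with its rigorous method), and Metropolis with noisy acceptance probabilities does not in general preserve the target stationary distribution. Once the local ratio formula above is in hand, the remainder of your annealing analysis is essentially the paper's; the one point your sketch elides is that the supports $\calS(\psi^{(j)})$ can change along the path, so a sample from $\pi^{(j)}$ might land where $P^{(j+1)}$ is undefined. The paper controls this via a ``$t$-balanced strings'' lemma bounding, in terms of the fidelity $F(\pi^{(j)},\pi^{(j+1)})$, the probability that a $\pi^{(j)}$-sample fails to be a valid warm start for the next chain.
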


It should be mentioned that  the property of being frustration-free
alone cannot render the adiabatic evolution efficiently simulatable
on a classical computer.
The following proposition is a simple extension of~\cite{ADKLLR:04}.
\begin{proposition}
\label{prop:adiabatic1}
Let $U$ be a quantum circuit with $n$ input qubits initialized in
the state $|0\ra$ and $L=poly(n)$ two-qubit gates. Let $|\psi_L\ra$
be the $n$-qubit output state of $U$. For any precision $\delta>0$
one can construct a family of frustration-free Hamiltonians $H(s)$,
$0\le s\le 1$ acting on $poly(n,\delta^{-1})$ qubits and satisfying
conditions (A1),(A2),(A3) such that the ground-state of $H(1)$
approximates $|\psi_L\ra$ with precision $\delta$ (after discarding
some ancillary qubits).
\end{proposition}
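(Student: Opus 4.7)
The plan is to adapt Kitaev's circuit-to-Hamiltonian construction together with the adiabatic simulation of quantum circuits from~\cite{ADKLLR:04}, making explicit that every Hamiltonian along the interpolation path is frustration-free. The key observation is that a Kitaev history Hamiltonian is automatically frustration-free: it is a sum of positive semi-definite terms, and the history state is annihilated by each of them.

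I first pad the circuit $U=U_L\cdots U_1$ with $M=\lceil L/\delta\rceil$ trailing identity gates to obtain a circuit of length $\tilde L=L+M$, and adjoin a clock register encoding time steps $t=0,1,\ldots,\tilde L$ in the standard unary encoding. Setting
\[
H_f=H_{\mathrm{clock}}+H_{\mathrm{in}}+\sum_{t=1}^{\tilde L} H_{\mathrm{prop},t},
\]
where $H_{\mathrm{clock}}$ penalizes invalid clock states, $H_{\mathrm{in}}$ penalizes the input register being non-$|0\ra$ when the clock reads $0$, and $H_{\mathrm{prop},t}=\frac{1}{2}\bigl(I\otimes|t-1\ra\la t-1|_c+I\otimes|t\ra\la t|_c-U_t\otimes|t\ra\la t-1|_c-U_t^\dagger\otimes|t-1\ra\la t|_c\bigr)$ enforces propagation under $U_t$, each term is positive semi-definite and annihilates the history state $|\eta\ra=(\tilde L+1)^{-1/2}\sum_{t=0}^{\tilde L}|\psi_t\ra\otimes|t\ra_c$, where $|\psi_t\ra=U_t\cdots U_1|0^n\ra$ for $t\le L$ and $|\psi_t\ra=|\psi_L\ra$ for $t>L$. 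Since $(M+1)/(\tilde L+1)\ge 1-\delta$ by the choice of $M$, discarding the clock and the circuit ancillas yields a reduced state within trace distance $\delta$ of $|\psi_L\ra\la\psi_L|$.

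For the adiabatic path I follow the ADKLLR strategy of switching on the propagation terms one at a time: divide $[0,1]$ into $\tilde L$ equal segments, and on the $t$-th segment linearly interpolate $H_{\mathrm{prop},t}$ from its identity-gate version (whose ground state is the history state of the circuit obtained by keeping the first $t-1$ gates and replacing the rest by identities) to its final form. At every $s$ the operator $H(s)$ is itself a Kitaev history Hamiltonian for a modified circuit, so it remains frustration-free with a unique history-state ground state. Condition (A1) holds because only one $O(1)$-norm term is being varied at a time. Condition (A2) follows from the standard spectral analysis of propagation Hamiltonians---equivalent to a nearest-neighbor random walk on a line of length $\tilde L$---combined with a projection-lemma bound for the sum with $H_{\mathrm{in}}$, yielding $\Delta\ge 1/\mathrm{poly}(n,\delta^{-1})$. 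Condition (A3) is trivial: $H(0)$ has the computational-basis ground state $|0^n\ra\otimes|0\ra_c$.

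The main obstacle is the uniform polynomial lower bound on the gap along the entire path, which however reduces directly to the spectral analysis of the propagation Hamiltonian on the clock register already carried out in~\cite{ADKLLR:04}. Relative to that work, the only new ingredients are the $O(L/\delta)$ padding, which boosts the overlap of the reduced ground state with $|\psi_L\ra$ to $1-\delta$, and the explicit observation that frustration-freeness is preserved at every $s$; both are routine.
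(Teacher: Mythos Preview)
The central claim that ``at every $s$ the operator $H(s)$ is itself a Kitaev history Hamiltonian for a modified circuit'' fails under the linear interpolation you describe, and with it frustration-freeness. During the $t$-th segment you replace the $t$-th propagation term by $(1-\lambda)H_{\mathrm{prop},t}^{(I)}+\lambda H_{\mathrm{prop},t}^{(U_t)}$ for $\lambda\in(0,1)$. Since both summands are positive semi-definite, a state is annihilated by the convex combination only if it lies in the intersection of their kernels. Combined with the clock, init, and the already-fixed propagation constraints this forces simultaneously $|\phi_t\ra=|\phi_{t-1}\ra$ and $|\phi_t\ra=U_t|\phi_{t-1}\ra$, i.e.\ $U_t|\phi_{t-1}\ra=|\phi_{t-1}\ra$ with $|\phi_{t-1}\ra=U_{t-1}\cdots U_1|0^n\ra$; this fails for a generic gate. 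Hence for $\lambda\in(0,1)$ the ground energy of $H(s)$ is strictly positive and $H(s)$ is \emph{not} frustration-free. Equivalently, the off-diagonal block carries the non-unitary operator $(1-\lambda)I+\lambda U_t$, so $H(s)$ is not the history Hamiltonian of any circuit.

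The paper avoids this by interpolating the \emph{gates} rather than the Hamiltonian terms: one picks a smooth path $U_j(s)$ in the (connected) special unitary group from $U_j(0)=I$ to $U_j(1)=U_j$ and sets $H_{\mathrm{prop},j}(s)=H_{\mathrm{prop},j}^{(U_j(s))}$. Then for every $s$ the Hamiltonian is genuinely the clock Hamiltonian of the circuit $U_L(s)\cdots U_1(s)$, hence frustration-free with a unique history ground state, and the uniform $\Omega(1/\tilde L^{2})$ gap bound applies verbatim at every $s$. Your one-gate-at-a-time schedule would also work once this fix is made. A minor additional slip: the ground state of $H(0)$ is not the basis vector $|0^n\ra\otimes|0\ra_c$ but the uniform history state $(\tilde L+1)^{-1/2}\sum_t |0^n\ra\otimes|t\ra_c$; condition~(A3) still holds since the overlap with any single basis vector $|0^n\ra\otimes|t\ra_c$ is $(\tilde L+1)^{-1/2}$, but the statement as written is incorrect.
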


For the sake of completeness we outline the proof in Appendix~B.
 Proposition~\ref{prop:adiabatic1} implies that simulating the adiabatic evolution
 with frustration-free Hamiltonians is as hard as simulating a universal quantum computer.
On the other hand, our technique does not permit an efficient
simulation of an adiabatic evolution with stoquastic Hamiltonians
which are not necessarily frustration-free. In fact, adiabatic
evolution with stoquastic Hamiltonians includes a variety of the
quantum annealing algorithms, see~\cite{Farhi:00}.
It is an interesting open question whether adiabatic evolution with
general stoquastic Hamiltonians can be simulated classically in
polynomial (or at least sub-exponential) time.
Our result raises a question: how hard is it to verify that a given
local Hamiltonian $H=\sum_a H_a$ is SFF? Clearly, the property of
being stoquastic can be verified efficiently since every term $H_a$
acts on $O(1)$ qubits. On the other hand, verifying that $H$ is
frustration-free requires evaluating the smallest eigenvalue of $H$.
This problem is known to be QMA-hard for general local Hamiltonians,
see~\cite{KSV:computation}. For stoquastic Hamiltonians the smallest
eigenvalue problem is known to be contained in QMA~$\cap$~AM, where
AM  (Arthur-Merlin games) is a probabilistic analogue of NP in which
the prover and the verified can exchange a constant number of
messages, see~\cite{BDOT:06}.
One can expect that verifying whether a stoquastic Hamiltonian is
frustration-free must be easier than evaluating the smallest
eigenvalue because for a positive instance we have the additional
information that the ground-state minimizes the expectation of every
local term in a Hamiltonian. Let us start from stating the problem
more formally.

\begin{definition}
A system of $(n,k)$-constraints is a family of $n$-qubit positive semidefinite
Hermitian operators $\{H_a\}$, $a=1,\ldots,M$, such that every $H_a$ acts non-trivially only on some subset of $k$
or less qubits. A system $\calC=\{H_a\}$ is satisfiable iff there exists a state $|\psi\ra$ such that
$H_a\, |\psi\ra=0$ for all $H_a$.
Such a state is called a satisfying assignment.
 We shall consider systems in which $k=O(1)$, $M\le poly(n)$ and $\|H_a\|\le poly(n)$.
\end{definition}

If a system $\calC=\{H_a\}$ is not satisfiable, any state $|\psi\ra$
violates at least one constraint, i.e., $\la \psi|H_a|\psi\ra>0$ for
some $H_a$. Let us define the {\it unsat-value} of a system $\calC$
as the smallest eigenvalue of a Hamiltonian $H_{\calC}=\sum_a H_a$,
\[
\unsat{\calC}=\min_{\psi\, : \, \la \psi|\psi\ra=1} \la\psi|H_{\calC} |\psi\ra, \quad H_{\calC}=\sum_a H_a.
\]
By definition, a system $\calC$ is satisfiable iff the corresponding Hamiltonian $H_{\calC}$
is frustration-free.
The quantum $k$-SAT problem is to distinguish the case when a system of $(n,k)$-constraints is
satisfiable from the case when it has a non-negligible (i.e. polynomial in $1/n$) unsat-value.
The quantum version
of the Cook-Levin theorem proved  by Kitaev~\cite{KSV:computation}
and developed further in~\cite{Bravyi06,Nagaj06} asserts that
quantum $k$-SAT belongs to QMA for any constant $k$ and
the quantum $4$-SAT is complete
for the class QMA${}_1$ (the analogue of QMA with zero completeness error).
Let us now define a {\it stoquastic} systems of constraints and the problem stoquastic $k$-SAT.
\begin{definition}
A system of $(n,k)$-constraints $\calC=\{H_a\}$ is called stoquastic iff every $H_a$
has real non-positive off-diagonal matrix elements in the standard basis.
\end{definition}

\begin{definition}
An instance of stoquastic $k$-SAT is a tuple $(n,\calC,\epsilon)$,
where  $\calC$ is a stoquastic system of $(n,k)$-constraints  and $\epsilon=n^{-O(1)}$ is a positive number. For yes-instances $\unsat{\calC}=0$. For no-instances $\unsat{\calC}\ge \epsilon$.
\end{definition}

It should be emphasized that stoquastic  $k$-SAT is a promise problem. It can be
represented by a pair of languages $(L_{\yes},L_{\no})$ such that
\[
L_{\yes}=\{ (n,\calC,\epsilon)\, : \, \unsat{\calC}=0\}, \quad L_{\no}=\{(n,\calC,\epsilon)\, : \, \unsat{\calC}\ge \epsilon\}.
\]
Note that classical $k$-SAT can be obtained as a special case of
stoquastic $k$-SAT when all the constraints $H_a$ are diagonal in
the standard basis with matrix elements $0,1$ on the diagonal. It
follows that  stoquastic $k$-SAT is NP-hard for $k\ge 3$. Our second
result is that  stoquastic $k$-SAT can be placed in the complexity
class MA --- a probabilistic analogue of NP with only one message
sent from  the prover to the verifier. For the sake of completeness
we present  a formal definition of MA and Promise-MA in Appendix~A.
In addition, we proved in \cite{BDOT:06} that stoquastic $k$-SAT is
complete for the class Promise-MA for sufficiently large $k$.
Putting these results together gives

\begin{theorem}\label{thm:MA}
The promise problem stoquastic $k$-SAT is contained in MA for any constant $k$.
Any promise problem in MA is Karp-reducible to
stoquastic $6$-SAT with constraints
$\{H_a=I-\Pi_a\}$ where $\Pi_a$ are projectors with matrix elements from a set  $\{ 0,\frac12,1\}$.
\end{theorem}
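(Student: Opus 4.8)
We establish the two directions in turn.

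\textbf{Stoquastic $k$-SAT belongs to MA.} Since each $H_a$ is stoquastic, so is $H_{\calC}=\sum_a H_a$, and there is no cancellation among the off-diagonal entries; let $G'$ be the ``transition graph'' on the vertex set $\Sigma^n$ with an edge $\{x,y\}$ whenever $\la x|H_{\calC}|y\ra\neq 0$. Then $H_{\calC}$ is block-diagonal with respect to the connected components of $G'$, and on each component it is irreducible, so by Perron--Frobenius the lowest eigenvalue of $H_{\calC}$ on a component $C$ is simple with a strictly positive eigenvector. Hence $\unsat{\calC}=0$ iff some component carries a strictly positive common zero-eigenvector of all the $H_a$, whereas $\unsat{\calC}\ge\epsilon$ forces \emph{every} component to have lowest eigenvalue at least $\epsilon$; and a Perron-type argument shows that the support of any non-negative satisfying assignment is a union of satisfiable components. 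Merlin's witness is a single basis vector $|x_0\ra$, claimed to lie in a satisfiable component (for yes-instances Merlin may also take $x_0$ with $|\la x_0|\psi\ra|^2\ge 2^{-n}$).

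Arthur's verification is a random walk started from $x_0$. At a configuration $x$ he picks $a\in\{1,\dots,M\}$ uniformly and considers the irreducible block $B$ of $H_a$ --- restricted to the $2^{O(1)}$ configurations agreeing with $x$ off the support of $H_a$ --- that contains $x$. If $H_a|_B$ has no zero eigenvalue he rejects; otherwise he moves within $B$ according to the stochastic matrix $I-\|H_a|_B\|^{-1}\,\Phi_B^{-1}(H_a|_B)\Phi_B$, where $\Phi_B=\mathrm{diag}(\phi_B)$ and $\phi_B>0$ is the Perron vector of $H_a|_B$. Each step is an $O(1)$-size linear-algebra computation, and since $\phi_B$ is proportional to the restriction of a satisfying assignment to $B$, the transition probabilities are computable without knowing $|\psi\ra$; moreover, for a yes-instance this walk never rejects, stays inside the component $C(x_0)$ containing $x_0$, and is reversible with respect to $\pi(x)=|\la x|\psi\ra|^2$. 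Arthur additionally performs, at every visited configuration, the analogous test for every $O(1)$-size subset of constraints whose joint support is $O(1)$ qubits (that the relevant local data glue to a strictly positive common zero-eigenvector), rejecting on any failure. In a yes-instance none of these tests ever fires, so Arthur accepts with certainty: this gives perfect completeness.

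The crux is soundness. Writing $\hat H_a$ for $H_a$ with each of its irreducible blocks rescaled to unit norm, the operator $\hat H=\frac1M\sum_a\hat H_a$ is positive semidefinite with $\|\hat H\|\le 1$, has $\ker\hat H=\bigcap_a\ker H_a$, and dominates $(M\max_a\|H_a\|)^{-1}H_{\calC}$ in the positive-semidefinite order; hence on every component in a no-instance its smallest eigenvalue is at least $\delta:=\epsilon/(M\max_a\|H_a\|)\ge 1/poly(n)$. The point is that $I-\hat H$ is exactly the transfer matrix obtained from Arthur's walk by weighting each transition $x\to y$ (through a block $B$) by $\phi_B(x)/\phi_B(y)$: in a yes-instance these weights telescope to ratios of $|\psi\ra$, while in a no-instance $I-\hat H$ is symmetric with spectral radius at most $1-\delta$ on $C(x_0)$. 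I would parlay this into the statement that in a no-instance, for an \emph{arbitrary} $x_0$ supplied by a cheating Merlin, the walk survives $poly(n,1/\epsilon)$ steps without firing any test with probability bounded below $1$ by an inverse polynomial; $O(1)$ independent repetitions then bring the MA soundness error down to $1/3$. Converting the operator-norm gap $\delta$ into such a rejection probability for the \emph{explicit} walk, uniformly over $x_0$, is the technical heart and the step I expect to be hardest: one cannot simply estimate a return probability (the reweighting has weights that can be exponentially large, and the target quantity can be exponentially small), so the argument must instead exploit the submatrix/interlacing structure of $I-\hat H$ directly, and it must handle configurations where the walk roams a region on which all local tests happen to pass, so that it is the global promise gap $\epsilon$ --- not any local violation --- that must be made to ``bite''.

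\textbf{MA Karp-reduces to stoquastic $6$-SAT.} This direction is essentially the content of \cite{BDOT:06}; the plan is the Feynman--Kitaev clock construction. Fix an MA verifier as a classical reversible circuit on the input $x$, a witness register left free for Merlin, a register of Arthur's random bits, and ancillas pinned to $|0\ra$; one may take the verifier to have perfect completeness. Since the verifier is \emph{classical}, every gate $U_t$ is a permutation matrix, so the propagation terms $\frac12(I-|t+1\ra\la t|\otimes U_t-|t\ra\la t+1|\otimes U_t^{\dagger})$ automatically have non-positive off-diagonal entries, and the whole Hamiltonian is stoquastic ``for free''. Adding the usual clock-legality terms, input-initialization terms (including a term placing the random-bit register in the uniform superposition), and an output term penalizing rejection yields a stoquastic system of constraints with unsat-value $0$ on yes-instances and at least $1/poly(n)$ on no-instances, by the standard geometric-Hamiltonian spectral-gap estimate. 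Finally, replacing the clock by a one-dimensional domain-wall clock and rescaling so that each term is a projector acting on $O(1)$ clock qubits together with the $O(1)$ qubits of a single gate reduces the locality to $6$ and puts all projector entries in $\{0,\frac12,1\}$; throughout, the only thing to check is that the gadgets preserve both stoquasticity and the inverse-polynomial promise gap.
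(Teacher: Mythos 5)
The architecture you set up---a verifier that receives a single basis string from Merlin and then runs a classical random walk whose transition probabilities are computed from the local ground-space data of the constraints $H_a$, together with the Feynman--Kitaev clock construction for hardness---matches the paper's. The hardness half is fine in outline: classical reversible gates give stoquastic propagation terms for free, the clock is the unary domain-wall clock, and the $\{0,\tfrac12,1\}$ projector entries come out of the Toffoli/initialization/clock constraints directly. But there is a genuine gap in your containment argument, and you flag it yourself: you never actually close the soundness case, and as stated your protocol has no lever to do so.

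The device you are missing is an extra test that the verifier performs on the \emph{accumulated reweighting factors}. In the paper's protocol the verifier records, at each step $j$, the ratio
\[
r_j \;=\; \frac{P^{a}_{x_{j-1}\to x_j}}{\la x_j|G_a|x_{j-1}\ra}\;=\;\sqrt{\frac{\la x_j|\Pi_a|x_j\ra}{\la x_{j-1}|\Pi_a|x_{j-1}\ra}},
\]
and at the end accepts only if the walk stayed in $S_{good}$ \emph{and} $\prod_{j=1}^L r_j\le 1$. On a yes-instance the $r_j$ telescope to $\la x_L|\psi\ra/\la w|\psi\ra$, which is $\le 1$ because Merlin sends the max-amplitude string $w$; so completeness is still perfect. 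On a no-instance this test is exactly what lets one control soundness despite the exponential reweightings that you correctly worry about: the acceptance probability is a sum over trajectories of $\bigl(\prod_j r_j\bigr)\cdot\prod_j\la x_{j-1}|G_{a_j}|x_j\ra/M$, and since accepted trajectories have $\prod_j r_j\le 1$ one may simply drop the $r_j$ factor and upper-bound the sum by $\sum_{x_1,\dots,x_L}\la x_0|G|x_1\ra\cdots\la x_{L-1}|G|x_L\ra\le 2^{n/2}\la x_0|G^L|+\ra\le 2^{n/2}\lambda_{\max}(G)^L$, with $\lambda_{\max}(G)\le 1-\epsilon\beta/M$ from the promise. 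Without this filtering your ``parlay'' step---turning the spectral gap of the symmetrized walk operator into a rejection probability uniformly over $x_0$---does not go through, precisely for the reasons you list (the conjugating weights $\phi_B(x)/\phi_B(y)$ are not uniformly bounded, so the spectral gap of the symmetrized operator does not directly bound hitting probabilities of the actual stochastic walk). Your interlacing/submatrix idea would still have to contend with components on which all local tests pass but the amplitudes drift by an exponential factor; the $\prod r_j\le 1$ test kills exactly those trajectories. So the proposal has the right skeleton and the right hardness proof, but the soundness half of the containment proof is not just ``hard'' in your version---it is missing the one move that makes it tractable.
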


This result can be regarded as a generalization of the Cook-Levin theorem proving NP-completeness
of the classical satisfiability problem.


The rest of the paper is organized as follows. In
Section~\ref{sec:pwork} we briefly review the previous work on the
subject. Section~\ref{sec:techniques} sketches the main ideas and
techniques used in the proof of Theorem~\ref{thm:adiabatic} and
Theorem~\ref{thm:MA}. Section~\ref{sec:examples} provides some
interesting examples of SFF Hamiltonians. Basic properties of
non-negative matrices required for understanding of our simulation
algorithms are presented in Section~\ref{sec:nonnegative}. A proof
of Theorem~\ref{thm:adiabatic} can be found in
Section~\ref{sec:adiabatic}. Theorem~\ref{thm:MA} is proved in
Section~\ref{sec:inMA}. In Section~\ref{sec:discussion} we discuss
some open problems and directions for future work. A formal
definition of the classes MA and Promise-MA is given in Appendix~A.
Finally, Appendix~B contains a proof of
Proposition~\ref{prop:adiabatic1} and a proof of MA-hardness of
stoquastic $6$-SAT.

\subsection{Previous work} \label{sec:pwork}

Stoquastic Hamiltonians are well known in computational physics as
Hamiltonians avoiding the ``sign problem". It was realized decades
ago that ground-state properties of such Hamiltonians can be
simulated using classical Monte Carlo algorithms,
see~\cite{Ceperley:90,Buonaura:98}, for systems as large as several
hundred qubits.
The general limitations of such algorithms which are likely to make
them inefficient in the complexity-theoretic sense were also well
understood, see e.g.~\cite{Buonaura:98,Hetherington:84}. The first
rigorous attempt to analyze the complexity of the smallest
eigenvalue problem for stoquastic Hamiltonians was made
in~\cite{BDOT:06}. It was shown that this problem belongs to the
complexity class AM (Arthur-Merlin games). Using the same ideas the
smallest eigenvalue problem was shown in the
unpublished~\cite{BBT:06} to be contained in a smaller class
SBP~$\subseteq$~AM, where SBP stands for Small Bounded-Error
Probability, see~\cite{BGM:03}. The complexity of the smallest
eigenvalue problem for $k$-local stoquastic Hamiltonians was shown
to be the same for all $k\ge 2$, see~\cite{BDOT:06}. A related
problem called ``Stoquastic Local Consistency" which involves
verifying certain consistency conditions for a collection of local
density matrices was studied by Liu~\cite{Liu:07}.

A connection between stoquastic Hamiltonians and classical
probabilistic computation was studied by Aharonov and Ta-Shma
in~\cite{Aharonov:03}. Using the technique of adiabatic state
generation these authors constructed quantum algorithms for
q-sampling from the stationary distribution of a reversible Markov
chain satisfying certain additional properties. An analogous
connection between stoquastic Hamiltonians and classical statistical
mechanics was obtained by Verstraete et al.~\cite{VWPC:06} and Somma
et al.~\cite{SBO:06}. These authors proved that a coherent version
of the Gibbs thermal state associated with any local classical
Hamiltonian can be represented as the unique ground-state of
stoquastic frustration-free Hamiltonian.

\section{Techniques}
\label{sec:techniques} This section highlights the main ideas and
techniques used  in the rigorous proofs in
Sections~\ref{sec:adiabatic} and~\ref{sec:inMA}.

\subsection{A random walk associated with a SFF Hamiltonian}
The main technical tool used throughout the paper is a novel random
walk algorithm that allows one to simulate some ground-state
properties of SFF Hamiltonians.
This algorithm is similar in spirit to the Green Function Monte
Carlo method (GFMC)  --- a probabilistic heuristic for the
simulation of quantum spin systems,
see~\cite{Ceperley:90,Buonaura:98}. However, in contrast to GFMC our
algorithm offers rigorous upper bounds on the running time and the
error probability.

Let $H=\sum_a H_a$ be some SFF Hamiltonian and $|\psi\ra$ be a
ground-state of $H$, i.e, $H_a\,|\psi\ra=0$ for all $a$. Using the
Perron-Frobenius theorem we will show in
Section~\ref{sec:nonnegative} that a ground-state of any SFF
Hamiltonian can be chosen as a vector with real non-negative
amplitudes in the standard basis (if the smallest eigenvalue has
multiplicity $q$, one can choose $q$ orthonormal ground-states such
that each state has non-negative amplitudes). For that reason we can
assume that $\la x|\psi\ra\ge 0$ for all $x\in \Sigma^n$. A set of
binary strings that appear in $|\psi\ra$ with a non-zero amplitude
will be called a {\it support} of $|\psi\ra$ and denoted as \be
\label{S(psi)} \calS(\psi)=\{ x\in \Sigma^n \, : \, \la x|\psi\ra
>0\}. \ee

A random walk associated with a Hamiltonian $H$ and a ground-state
$|\psi\ra$ is a random walk on the set $\calS(\psi)$ with transition
matrix \be \label{transition|matrix} P_{x\to y} = \frac{\la
y|\psi\ra}{\la x|\psi\ra} \, \la y|G|x\ra, \quad G=I-\beta\, H,
\quad \mbox{for any $x\in \calS(\psi)$}. \ee Here $\beta>0$ is a
real parameter that is chosen sufficiently small in order to make
$G$ a matrix with non-negative entries, so that $P_{x\to y}\ge 0$.
One can infer directly from the definition that $P_{x\to y}=0$
unless $y\in \calS(\psi)$. Besides, the eigenvalue equation $G\,
|\psi\ra =|\psi\ra$ implies $\sum_{y\in S(\psi)} P_{x\to y}=1$ for
any $x\in \calS(\psi)$. Therefore Eq.~(\ref{transition|matrix})
indeed defines a random walk on the set $\calS(\psi)$. Direct
inspection shows that the stationary distribution of the random walk
is \be \label{fixed|point} \pi(x)=\la x|\psi\ra^2. \ee Here we
assumed that $|\psi\ra$ is a normalized state, i.e., $\la
\psi|\psi\ra=1$.
Note that for a given Hamiltonian $H$ one may have several
stationary distributions supported on mutually disjoint sets of
basis vectors such that each distribution is associated with some
non-negative ground-state of $H$.

Let us now argue that  the random walk defined in
Eq.~(\ref{transition|matrix}) can be efficiently simulated on a
classical probabilistic computer. Indeed, let $\Pi_a$ be the
spectral  projector corresponding to  the zero eigenvalue of $H_a$.
By definition, $\Pi_a\, |\psi\ra=|\psi\ra$ for all $a$. The crucial
property of the projectors $\Pi_a$ is that they have real
non-negative matrix elements in the standard basis. This property
can be proved using elementary algebra, see
Section~\ref{sec:nonnegative}. Moreover, we will show that any
projector with non-negative entries can be decomposed into  a direct
sum of rank-one projectors with non-negative entries. Using this
decomposition we shall be able to show that if $x\in \calS(\psi)$
and $\la y|H_a|x\ra <0$ for some $a$ and some $y\in \Sigma^n$ then
$y\in \calS(\psi)$ and \be \label{amplitude|ratio} \frac{\la
y|\psi\ra}{\la x|\psi\ra}= \sqrt{\frac{\la y|\Pi_a|y\ra}{\la
x|\Pi_a|x\ra}}. \ee
Note that $P_{x\to y}=0$ unless $\la
y|G|x\ra>0$, that is, $\la y|H_a|x\ra<0$ for some $a$. Since every
$H_a$ acts non-trivially on $O(1)$ qubits, the number of strings $y$
such that $P_{x\to y}>0$ is at most $poly(n)$.
 Therefore, given a current position of the walk $x\in S(\psi)$
one can efficiently simulate one step of the walk by first finding all strings $y$ such that
$\la y|H_a|x\ra<0$ for some $a$ and then using Eq.~(\ref{amplitude|ratio}) and Eq.~(\ref{transition|matrix})
to compute the transition probabilities.

\subsection{Simulation of the adiabatic evolution}
\label{subs:technique|adiabatic}
When we discretize the adiabatic evolution we get a family of SFF
Hamiltonians $H^{(j)}=\sum_{a=1}^M H_a(j/T)$, where $j=0,\ldots,T$
is a discrete time step. The Hamiltonian $H^{(j)}$ has a unique
non-negative ground-state $|\psi^{(j)}\ra$ with support
$\calS(\psi^{(j)})$. Using definition Eq.~(\ref{transition|matrix})
gives us a family of efficiently simulatable random walks
$P^{(0)},P^{(1)},\ldots,P^{(T)}$ such that $P^{(j)}$ is a random
walk on the set $\calS(\psi^{(j)})$. The walk $P^{(j)}$ has a
stationary distribution $\pi^{(j)}$ such that $\pi_x^{(j)}=\la
x|\psi^{(j)}\ra^2$. Note that the spectrum of $P^{(j)}$ coincides
with the spectrum of $G^{(j)}=I-\beta H^{(j)}$ restricted to a
subspace spanned by basis vectors from $S(\psi^{(j)})$. Since the
largest eigenvector of $G^{(j)}$ belongs to this subspace, the
spectral gap of $P^{(j)}$ (i.e. the gap between the largest and the
second largest eigenvalues) is at least the spectral gap of
$G^{(j)}$. Condition (A2) implies that the spectral gap of $G^{(j)}$
is at least $\beta \Delta$. Therefore the spectral gap of $P^{(j)}$
is at least $\beta \Delta$ which is polynomial in $1/n$. 

Recall that the goal of the simulation is to sample $x$ from the final distribution $\pi_x^{(T)}$.
Since the  spectral gap of the walk $P^{(T)}$ is polynomial in $1/n$,
all we need is a {\it warm start} for $P^{(T)}$, that is a string
$x\in S(\psi^{(T)})$ such that the stationary distribution
$\pi^{(T)}$ has a non-negligible probability at $x$. Here
non-negligible means $\pi^{(T)}_x\ge 2^{-poly(n)}$.
 Our strategy is to generate a warm start for
the walk $P^{(j+1)}$ using a warm start for $P^{(j)}$ by making sufficiently many steps
of the walk $P^{(j)}$ such that the endpoint of $P^{(j)}$ is a string sampled from the
stationary distribution $\pi^{(j)}$ (with an exponentially small error).
As far as the initial walk $P^{(0)}$ is concerned, a warm start can be efficiently generated due to condition (A3).
The main technical challenge is to bound the probability  of failure, i.e., the probability that for some $j$
the end-point of the walk $P^{(j)}$ is not a warm start for the next walk $P^{(j+1)}$.

In order to achieve this we introduce the notion of {\it
$t$-balanced strings}. Namely, given probability distributions $\pi$
and $\rho$ on the set $\Sigma^n$, a string $x\in \Sigma^n$ is
$t$-balanced with respect to $\pi$ and $\rho$ iff $\pi_x,\rho_x>0$
and $t^{-1}\le \pi_x/\rho_x\le t$. We show that for sufficiently
large (but constant)  $t$ the probability for a string $x$ drawn
from $\pi$ to be $t$-balanced is at least $1- O(1-F(\pi,\rho))$,
where $F(\pi,\rho)=\sum_x \sqrt{\pi_x \rho_x}$ is the fidelity
between $\pi$ and $\rho$. Using conditions (A1) and (A2) we shall
bound the fidelity between $\pi^{(j)}$ and $\pi^{(j+1)}$ as \be
\label{fidelity|bound} F(\pi^{(j)},\pi^{(j+1)})\ge 1- \frac{J^2}{T^2
\Delta^2}. \ee It follows that a string drawn from $\pi^{(j)}$ is
$t$-balanced (for some $t=O(1)$) with respect to $\pi^{(j)}$ and
$\pi^{(j+1)}$ with probability at least $1-O(J^2 T^{-2}
\Delta^{-2})$.
Choosing $T\gg J^2 \Delta^{-2}$ and choosing the number of steps in
each walk much larger than the inverse spectral gap we shall prove that the
end-point of the walk $P^{(j)}$ is a warm start for the walk
$P^{(j+1)}$ for all $j=0,\ldots,T-1$ with probability at least
$1-O(J^2 T^{-1} \Delta^{-2}) \approx 1$.

\subsection{Stoquastic $k$-SAT is contained in MA}
\label{sec:pre_inMA} Let $(n,\calC,\epsilon)$ be an instance of the
stoquastic $k$-SAT problem. Here $\calC=\{H_a\}_{a=1,\ldots,M}$ is a
system of stoquastic constraints. Define a Hamiltonian
$H=\sum_{a=1}^M H_a$. By definition,
\[
\ba{rcl}
\mbox{$(n,\calC,\epsilon)$ is yes-instance} &\Rightarrow& \mbox{$H$ is a SFF Hamiltonian}, \nn \\
\mbox{$(n,\calC,\epsilon)$ is no-instance} &\Rightarrow& \mbox{The smallest eigenvalue of $H$ is at least $\epsilon$.} \nn
\ea
\]
Let $\Pi_a$ be the spectral  projector corresponding to the zero
eigenvalue of $H_a$. We shall partition the set of all binary
strings $\Sigma^n$ into good and bad strings, such that \be
\label{good|bad} S_{good}=\{ x\in \Sigma^n \, : \, \la x|\Pi_a|x\ra
>0 \quad \mbox{for all $a=1,\ldots,M$} \} \ee and $S_{bad}=\Sigma^n
\backslash S_{good}$. For any instance $(n,\calC,\epsilon)$ and any
$x\in S_{good}$ define the transition probabilities \be
\label{sat|transition|matrix} P_{x\to y} =\frac1M \sum_{a=1}^M
\sqrt{\frac{\la y|\Pi_a|y\ra}{\la x|\Pi_a|x\ra}} \, \la y |G_a
|x\ra, \quad G_a = I-\beta H_a, \ee where $\beta>0$ is a real
parameter  chosen sufficiently small so that all matrices $G_a$ are
non-negative. One can infer directly from the definition that
$P_{x\to y}=0$ unless $\la y|\Pi_a|y\ra>0$ for some $a$. Therefore,
for any given $x$ the number of strings $y$ such that $P_{x\to y}>0$
is at most $poly(n)$.
The property that $\Pi_a$ is a direct sum of
rank-one projectors with non-negative entries implies that the
transition probabilities are normalized, that is, $\sum_{y\in
\Sigma^n} P_{x\to y}=1$ for any $x\in S_{good}$, see
Section~\ref{sec:nonnegative} for details. However, the transition
probabilities Eq.~(\ref{sat|transition|matrix}) do not automatically
define a random walk on the set of good strings because one may have
transitions from a good string to a bad string. In other words,
Eq.~(\ref{sat|transition|matrix}) permits transitions $S_{good}\to
S_{good}$ as well as $S_{good}\to S_{bad}$.

It turns out that if $(n,\calC,\epsilon)$  is a positive instance
then Eq.~(\ref{sat|transition|matrix}) does define a random walk on
some subset of good strings. Indeed, let $|\psi\ra$ be a {\it
satisfying assignment}, i.e., a state satisfying $H_a\, |\psi\ra=0$
for all $a$. As was mentioned above, we can assume that $|\psi\ra$
has real non-negative amplitudes. Using the eigenvalue equations
$\Pi_a\, |\psi\ra=|\psi\ra$ one can easily show that $|\psi\ra$ is
supported only on good strings, $S(\psi)\subseteq S_{good}$, and for
any $x\in S(\psi)$ the transition probabilities
Eq.~(\ref{sat|transition|matrix}) can be expressed as \be
\label{sat|transition|matrix1} P_{x\to y} = \frac{\la y|\psi\ra}{\la
x|\psi\ra} \la y|G|x\ra, \quad G=\frac1M\sum_{a=1}^M G_a. \ee
Repeating the same arguments as in
Section~\ref{subs:technique|adiabatic} we conclude that
Eq.~(\ref{sat|transition|matrix1}) is a transition matrix of a
random walk on the set $S(\psi)$. For any given starting string
$x\in S(\psi)$ the walk stays in $S(\psi)$ forever. The walk can be
efficiently simulated on a classical probabilistic computer. (In
contrast to Section~\ref{subs:technique|adiabatic} the mixing time
of the walk is not a matter of concern.)

Suppose now that $(n,\calC,\epsilon)$ is a negative instance. One
can still use Eq.~(\ref{sat|transition|matrix}) to simulate a random
walk starting from a good string until the first time the walk hits
a bad string. It will be shown in Section~\ref{sec:inMA} that the
probability for the walk starting from a good string to stay in
$S_{good}$ for $L$ steps (and satisfy some extra tests which are
always passed for positive instances) decays approximately as
$(1-\beta \epsilon/M)^L$. Given the polynomial bounds on $\beta$,
$\epsilon$, and $M$ we can make this probability exponentially small
with $L=poly(n)$.

In order to prove that $(n,\calC,\epsilon)$ is a positive instance, the prover can send the verifier
a binary string $w\in \Sigma^n$ such that some satisfying assignment $|\psi\ra$
has a non-negligible amplitude on $w$ (for a negative instance $w$ can be arbitrary string).
The verifier checks whether $w\in S_{good}$ and simulates $L=poly(n)$ steps of the random walk
defined above starting from $w$. Whenever the walk hits a bad string, the verifier aborts the simulation and
outputs 'no'. If the walk stays in $S_{good}$ for $L$ steps, the verifier outputs 'yes' (conditioned on the
outcome of some extra tests described in Section~\ref{sec:inMA}).

\subsection{MA-hardness}
MA-hardness of the stoquastic $6$-SAT problem with the constraints
satisfying conditions of Theorem~\ref{thm:MA} follows directly from
from MA-hardness of the stoquastic $6$-local Hamiltonian problem,
see Lemma~3 in~\cite{BDOT:06}. In order to make the paper
self-contained we repeat the proof of~\cite{BDOT:06} with some minor
modifications in Appendix~B.

\section{Examples of stoquastic frustration-free Hamiltonians}
\label{sec:examples}

In this section we give some examples of SFF Hamiltonians. Firstly,
the results obtained by Verstraete et al.~\cite{VWPC:06} and Somma
et al.~\cite{SBO:06} imply that a coherent version of the thermal
Gibbs state associated with any {\em classical} local Hamiltonian
can be represented as the unique ground-state of a SFF Hamiltonian.
Secondly, we use the clock Hamiltonian construction
from~\cite{KSV:computation} to show that a coherent version of a
probability distribution generated by any polynomial-size classical
reversible circuit can be approximated by the ground-state of a SFF
Hamiltonian.

\subsection{Coherent thermal states of classical Hamiltonians}
Let $H$ be any classical Hamiltonian acting on $n$ qubits ($H$ has a
diagonal matrix in the standard basis). Denote $H(x)=\la x|H|x\ra$.
Choose any $\beta>0$ and  consider a coherent version of the thermal
Gibbs state \be \label{Gibbs} |\pi\ra=\calZ^{-1/2} \sum_{x\in
\Sigma^n} e^{-\beta H(x)/2 }\, |x\ra, \quad \calZ=\trace{\,e^{-\beta
H}}. \ee Given a state $|\pi\ra$ one can sample $x$ from the Gibbs
distribution $\pi_x = \calZ^{-1} e^{-\beta    H(x)}$ by measuring
the state in the standard basis. More interestingly, given coherent
Gibbs states $|\pi\ra$ and $|\pi'\ra$ corresponding to some
classical Hamiltonians $H$ and $H'$, one can perform the swap test
on the two states thus evaluating the statistical difference between
the two Gibbs distributions, see~\cite{Aharonov:03}.

Suppose that $H$ is a local Hamiltonian, $H=\sum_a H_a$, and each
qubit is acted on by a constant number of terms $H_a$. The analysis
performed in~\cite{VWPC:06,SBO:06} shows  that the state
Eq.~(\ref{Gibbs}) is the unique ground-state of some SFF Hamiltonian
$H_\beta$.  Indeed, one can represent $|\pi\ra$ as follows (ignoring
the normalization)
\[
|\pi\ra=e^{-\beta H/2} \, |+\ra, \quad |+\ra = \sum_{x\in \Sigma^n} |x\ra.
\]
Let $X_j$ be the Pauli $X$-matrix, $X=\left(\ba{cc} 0 &1 \\ 1 & 0\\ \ea\right)$,
 acting on qubit $j$. Using the representation above one can easily check that
\[
X_j\, |\pi\ra = \Gamma_j \, |\pi\ra, \quad \Gamma_j =X_je^{-\beta H/2 } X_j e^{\beta H/2}
\]
for all $j=1,\dots,n$. Note that the operator $\Gamma_j$ is diagonal in the standard basis.
Since all matrix elements of $\Gamma_j$ are real, we conclude that $\Gamma_j$ is Hermitian.
Since we assumed that any qubit is acted on by a constant number of local terms in $H$, we conclude that
$\Gamma_j$ acts non-trivially on a constant number of qubits.
Define a Hamiltonian
\[
H_\beta=\sum_{j=1}^n \Gamma_j - X_j
\]
such that $H_\beta \, |\pi\ra=0$. The Perron-Frobenius theorem
implies that $|\pi\ra$ is the unique ground-state of $H_\beta$
(indeed, $G=I-\gamma H$ is a non-negative irreducible matrix for
sufficiently small $\gamma>0$ and $|\psi\ra$ is a positive
eigenvector of $G$). The same argument shows that $|\pi\ra$ is a
ground-state of every local term $\Gamma_j-X_j$.  It follows that
$\Gamma_j-X_j$ is a positive semi-definite operator and thus
$H_\beta$ is a SFF Hamiltonian with unique ground-state $|\pi\ra$.

\subsection{Coherent probabilistic computation}
\label{subs:coherent}
Let $U$ be a classical polynomial-size circuit with reversible gates
(e.g. Toffoli gates) with $n$
input and $n$ output bits. Assume that the first $k$ input bits are
drawn from the uniform distribution and the last $n-k$ input bits
are initialized to $0$. Let  $\pi$ be the corresponding  probability
distribution  of the output bits. Consider a coherent version of
$\pi$, \be \label{pi|state} |\pi\ra=\sum_{x\in \Sigma^n}
\sqrt{\pi_x} \, |x\ra. \ee
We claim that  $|\pi\ra$ can be
represented as the unique ground-state of some SFF Hamiltonian.
More strictly, for any precision $\delta>0$ there exists a SFF
Hamiltonian $H$ acting on  $poly(n,\delta^{-1})$ qubits such that $H$ has an unique ground-state
$|\psi\ra$ satisfying $\la \psi|\pi\otimes \phi_a\ra\ge 1-\delta$
for some simple ancillary state $|\phi_a\ra$.
Such a Hamiltonian $H$ can be constructed by transforming $U$ into a quantum circuit $\tilde{U}$
taking as input a state $|+\ra^{\otimes k}|0\ra^{\otimes (n-k)}$, where $|+\ra=(|0\ra+|1\ra)/\sqrt{2}$.
The circuit $\tilde{U}$ first applies the gates of $U$ in a coherent fashion and then
applies $\delta^{-1} L$ identity gates, where $L$ is the number of gates in $U$.
Note that the output state of $\tilde{U}$ is $|\pi\ra$. Applying the clock Hamiltonian construction of~\cite{KSV:computation}
to $\tilde{U}$  we get the desired Hamiltonian $H$.
The details of this constructions are presented in Appendix~B.

\section{Non-negative matrices: basic properties}
\label{sec:nonnegative} This section summarizes some basis
properties of non-negative matrices that are needed in understanding
our simulation algorithms. Let us start from setting up some
terminology and notations. A matrix is called non-negative iff all
its entries are real and non-negative. A term {\it non-negative
projector} will refer to a Hermitian projector acting on $\calQ^n$
which has a non-negative matrix in the standard basis. Analogously,
a term {\it non-negative state} will refer to a normalized vector
$|\psi\ra\in \calQ^n$ such that all amplitudes of $|\psi\ra$ in the
standard basis are real and non-negative. Let us start from a simple
observation.
\begin{proposition}
\label{prop:SFF}
Let $H$ be a Hermitian operator with non-positive off-diagonal matrix elements
in the standard basis. Then the spectral projector  $\Pi$
corresponding to the smallest eigenvalue of $H$ is non-negative.
\end{proposition}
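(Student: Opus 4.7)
The plan is to reduce this to the Perron–Frobenius theorem by shifting $H$ to a non-negative matrix and then exploiting its Hermiticity plus block structure.

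First, choose a scalar $c \ge \max_x \la x|H|x\ra$ and set $G = cI - H$. In the standard basis the diagonal entries of $G$ are $c-\la x|H|x\ra \ge 0$ and, because $H$ has non-positive off-diagonal entries, the off-diagonal entries of $G$ are non-negative. Thus $G$ is a non-negative Hermitian (real symmetric) matrix. The spectral projector $\Pi$ onto the smallest eigenvalue of $H$ coincides with the spectral projector of $G$ onto its largest eigenvalue $\rho = c - \lambda_{\min}(H)$, so it suffices to prove that the top spectral projector of a real symmetric non-negative matrix is non-negative.

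Next, I would partition the index set $\Sigma^n$ according to the connected components of the graph $\Gamma(G)$ whose edges are the pairs $\{x,y\}$ with $\la y|G|x\ra>0$. Since $G$ is symmetric, these components give an orthogonal block decomposition $G=\bigoplus_c G_c$ where each $G_c$ is an irreducible non-negative symmetric matrix. By the classical Perron–Frobenius theorem applied to each $G_c$, its spectral radius $\rho_c$ is a simple eigenvalue with a strictly positive eigenvector $|w_c\ra$ supported on component $c$. Let $C^\star=\{c : \rho_c=\rho\}$ be the set of components whose Perron eigenvalue equals the global maximum $\rho$. Then the $\rho$-eigenspace of $G$ is exactly $\mathrm{span}\{|w_c\ra : c\in C^\star\}$, and since distinct components have disjoint supports the vectors $|w_c\ra$ are orthonormal (after normalization). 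Hence
\[
\Pi = \sum_{c\in C^\star} |w_c\ra\la w_c|,
\]
a sum of rank-one projectors onto non-negative vectors with pairwise disjoint supports; every matrix element of $\Pi$ is therefore non-negative.

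The main obstacle I would need to be careful about is the possibility that the smallest eigenvalue of $H$ is degenerate. If $\rho$ were simple, the statement would follow immediately from the standard Perron–Frobenius theorem giving a non-negative top eigenvector. In the degenerate case one cannot just pick one eigenvector; the argument above handles this by showing that degeneracy can arise only from several irreducible components tying for the top eigenvalue, and the disjointness of their supports is exactly what keeps $\Pi$ non-negative. This decomposition will also be useful later in the paper: it foreshadows the statement, mentioned in the discussion of Eq.~(\ref{amplitude|ratio}), that every non-negative Hermitian projector is an orthogonal sum of rank-one non-negative projectors.
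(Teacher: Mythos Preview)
Your proof is correct but takes a different route from the paper's. The paper gives a one-line argument: $\Pi = q\,\lim_{\beta\to\infty} e^{-\beta H}/\trace(e^{-\beta H})$, where $q$ is the multiplicity of the smallest eigenvalue; since $e^{-\beta H}$ is entrywise non-negative for all $\beta\ge 0$ (e.g.\ because $e^{-\beta H}=e^{-\beta c}e^{\beta(cI-H)}$ with $cI-H$ entrywise non-negative, and the power series of the latter has only non-negative terms), the limit $\Pi$ is non-negative. Your approach instead shifts to $G=cI-H$, decomposes $G$ into irreducible blocks via the support graph, and applies Perron--Frobenius blockwise. The paper's argument is shorter and avoids invoking Perron--Frobenius at this stage; your argument is more structural and in fact proves more, since it directly exhibits $\Pi$ as an orthogonal sum of rank-one non-negative projectors with pairwise disjoint supports. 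That is exactly the content of the paper's Lemma~\ref{lemma:projectors}, which the paper establishes separately (and by essentially the same connected-component idea, applied there to $\Pi$ itself rather than to $G$). So your route effectively merges Proposition~\ref{prop:SFF} and Lemma~\ref{lemma:projectors} into a single argument, which is a perfectly reasonable organization.
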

\begin{proof}
Indeed, $\Pi=q \cdot \lim_{\beta\to \infty} e^{-\beta H}/Z(\beta)$,
where $Z(\beta)=\trace{(e^{-\beta H})}$
and $q$ is the multiplicity of the smallest eigenvalue.
 Since $e^{-\beta H}$ is a non-negative matrix
for any $\beta\ge 0$, the limit $\Pi$
is a non-negative matrix.
\end{proof}


Next we shall give a simple characterization of non-negative projectors.
\begin{lemma}
\label{lemma:projectors}
For any non-negative projector $\Pi$ of rank $q$ there exist
non-negative states $|\psi_1\ra,\ldots,|\psi_q\ra$ such that
$\la \psi_a|\psi_b\ra=\delta_{a,b}$ for all $a, b$ and
$\Pi=\sum_{a=1}^q |\psi_a\ra\la \psi_a|$.
\end{lemma}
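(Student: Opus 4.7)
The plan is to show that, after a suitable block decomposition of $\Pi$ along basis vectors, each block is a strictly positive projector to which the Perron--Frobenius theorem applies, forcing it to have rank one.

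First I would reduce the lemma to this block structure. Let $S = \{x \in \Sigma^n : \la x|\Pi|x\ra > 0\}$. Since $\Pi^2 = \Pi$ and $\Pi$ is real symmetric with non-negative entries, $\la x|\Pi|x\ra = \sum_z \la x|\Pi|z\ra^2$, so $x \notin S$ implies that the entire row and column of $\Pi$ indexed by $x$ vanishes; such basis vectors can be ignored. On $S$, define a relation by $x \sim y$ iff $\la x|\Pi|y\ra > 0$. I would verify that this is an equivalence relation: reflexivity is the definition of $S$, symmetry follows from $\Pi = \Pi^T$, and transitivity is the crucial point that uses $\Pi^2 = \Pi$: if $\la x|\Pi|y\ra>0$ and $\la y|\Pi|z\ra>0$, then
\[
\la x|\Pi|z\ra \;=\; \sum_w \la x|\Pi|w\ra\, \la w|\Pi|z\ra \;\ge\; \la x|\Pi|y\ra\, \la y|\Pi|z\ra \;>\;0,
\]
because every term in the sum is non-negative. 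This partitions $S$ into classes $C_1,\ldots,C_r$, and by construction $\la x|\Pi|y\ra = 0$ whenever $x$ and $y$ lie in different classes, so $\Pi$ is block-diagonal with respect to this partition.

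Next I would analyze a single block $\Pi_C := P_C \Pi P_C$, where $P_C$ is the projector onto $\mathrm{span}\{|x\ra : x\in C\}$. By definition of $\sim$, the matrix of $\Pi_C$ in the standard basis of $C$ has strictly positive entries, and it is still a Hermitian projector (since the off-diagonal blocks vanish, $\Pi_C^2 = P_C \Pi^2 P_C = \Pi_C$). Its spectrum is contained in $\{0,1\}$, and because $\Pi_C \ne 0$ its spectral radius equals $1$. The Perron--Frobenius theorem applied to a strictly positive symmetric matrix says that the spectral radius is a simple eigenvalue with a strictly positive eigenvector. Therefore the $1$-eigenspace of $\Pi_C$ is one-dimensional, spanned by a unique non-negative unit vector $|\psi_C\ra$, and $\Pi_C = |\psi_C\ra\la\psi_C|$.

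Finally I would assemble these rank-one pieces. Since distinct classes give vectors $|\psi_{C_i}\ra, |\psi_{C_j}\ra$ with disjoint supports, they are automatically orthonormal, and
\[
\Pi \;=\; \sum_{i=1}^{r} \Pi_{C_i} \;=\; \sum_{i=1}^{r} |\psi_{C_i}\ra\la \psi_{C_i}|.
\]
Comparing ranks gives $r = q$, and relabeling yields the desired $|\psi_1\ra,\ldots,|\psi_q\ra$. I expect the main obstacle in writing this carefully is the transitivity step, which is the only place the projector identity $\Pi^2 = \Pi$ interacts non-trivially with non-negativity; once that is in hand, the Perron--Frobenius application is standard and the rest is bookkeeping.
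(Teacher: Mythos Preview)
Your proof is correct and follows essentially the same approach as the paper: define the equivalence relation $x\sim y \Leftrightarrow \la x|\Pi|y\ra>0$, use $\Pi^2=\Pi$ together with non-negativity to obtain transitivity, observe that $\Pi$ is block-diagonal along the equivalence classes with strictly positive blocks, and apply Perron--Frobenius to conclude each block has rank one. Your write-up is slightly more explicit than the paper's (e.g.\ you spell out why rows indexed by $x\notin S$ vanish and why each $\Pi_C$ is still a projector), but the argument is the same.
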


Note that non-negative states  are pairwise orthogonal iff they have
support on non-overlapping subsets of basis vectors,
that is, $\calS(\psi_a)\cap \calS(\psi_b)=0$ for $a\ne b$. Thus the
lemma asserts that a non-negative projector is always
block-diagonal (up to a permutation of basis vectors) with each block
being a projector onto a non-negative state.
Combining Lemma~\ref{lemma:projectors} and Proposition~\ref{prop:SFF}
one concludes that
\begin{corollary}
\label{cor:ground|states} The ground-subspace of any stoquastic
Hamiltonian has an orthonormal basis of non-negative ground-states.
\end{corollary}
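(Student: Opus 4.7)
The corollary is an essentially immediate consequence of the two results cited just above it, so my plan is to compose them cleanly and check that the off-diagonal sign condition really does transfer from the local terms $H_a$ to the full Hamiltonian $H$.

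First, I would observe that if $H = \sum_a H_a$ is stoquastic in the sense of the paper, then each $H_a$ has real non-positive off-diagonal entries in the standard basis, and therefore so does the sum $H$ (a sum of non-positive reals is non-positive, and the diagonal contribution from the $H_a$'s cannot spoil off-diagonal signs). Thus $H$ itself satisfies the hypothesis of Proposition~\ref{prop:SFF}. Applying that proposition to $H$, the spectral projector $\Pi$ onto the ground-subspace of $H$, i.e.\ the eigenspace associated with the smallest eigenvalue, is a non-negative projector on $\calQ^n$.

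Next, let $q$ denote the degeneracy of the smallest eigenvalue, so that $\Pi$ has rank $q$. By Lemma~\ref{lemma:projectors} applied to $\Pi$, there exist non-negative states $|\psi_1\ra,\ldots,|\psi_q\ra$ with $\la \psi_a|\psi_b\ra = \delta_{a,b}$ and $\Pi = \sum_{a=1}^q |\psi_a\ra\la\psi_a|$. Each $|\psi_a\ra$ lies in the range of $\Pi$, hence is a ground-state of $H$. Therefore $\{|\psi_1\ra,\ldots,|\psi_q\ra\}$ is an orthonormal basis of the ground-subspace consisting of non-negative states, which is exactly what the corollary claims.

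There is not really a hard step here given the two prior results; the only point that deserves care is the trivial-looking first reduction, namely that ``stoquastic" in the paper is a statement about each local term $H_a$ while Proposition~\ref{prop:SFF} is phrased for a single operator with non-positive off-diagonal entries. I would simply note explicitly that the stoquastic property is closed under sums to justify applying the proposition to $H$ as a whole. Everything else is bookkeeping: once $\Pi$ is known to be non-negative, Lemma~\ref{lemma:projectors} supplies the basis, and membership in the ground-subspace is automatic from $|\psi_a\ra = \Pi|\psi_a\ra$.
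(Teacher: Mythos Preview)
Your proof is correct and matches the paper's approach exactly: the paper simply states that the corollary follows by combining Proposition~\ref{prop:SFF} with Lemma~\ref{lemma:projectors}, and you have spelled out precisely that combination. Your explicit remark that the stoquastic condition on the individual $H_a$ passes to the sum $H$ (so that Proposition~\ref{prop:SFF} applies to $H$ itself) is a detail the paper leaves implicit but is indeed the only thing one needs to check.
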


Let us proceed with the proof of Lemma~\ref{lemma:projectors}.
\begin{proof}
For any basis vector $|x\ra$
define a ``connected component" \[ T_x=\{y\in \Sigma^n \, : \,  \la x|\Pi|y\ra
>0\}. \] (Some of the sets $T_x$ may be empty.) For any triple
$x,y,z$ the inequalities $\la x|\Pi |y\ra >0$, $\la y|\Pi |z\ra >0$
imply $\la x|\Pi |z\ra >0$ since \[ \la x|\Pi |z\ra = \la x|\Pi^2
|z\ra =\sum_{u\in \Sigma^n} \la x|\Pi|u\ra \la u |\Pi|z\ra \ge \la
x|\Pi |y\ra \la y|\Pi |z\ra >0. \] A similar argument shows that if
$T_x$ is non-empty then  $\la x|\Pi|x\ra>0$, that is, $x\in T_x$.
Therefore the property $\la x|\Pi |y\ra>0$ defines a symmetric
transitive relation on $\Sigma^n$ and we have
\begin{itemize}
\item $y\in T_x$ implies $T_y=T_x$,
\item $y\notin T_x$ implies $T_y\cap T_x=\emptyset$.
\end{itemize}
Consider a subspace $\calH(T_x)\subseteq \calQ^n$ spanned by the basis
vectors from $T_x$. Clearly $\calH(T_x)$ is $\Pi$-invariant. Thus
$\Pi$ is block diagonal w.r.t. decomposition of the whole Hilbert
space into the direct sum of spaces $\calH(T_x)$
and the orthogonal complement where $\Pi$ is zero.
 Moreover, the
restriction of $\Pi$ onto any non-zero subspace $\calH(T_x)$ is a
projector with strictly positive entries. According to the
Perron-Frobenius theorem, the largest eigenvalue of a Hermitian
operator with positive entries is non-degenerate. Thus each block of
$\Pi$ has rank $1$, since a projector has eigenvalues $0$ and $1$
only.
\end{proof}

We shall use this characterization of non-negative projectors to
derive the following lemma that plays a key role in the definition
of a random walk in Section~\ref{sec:inMA}.

\begin{lemma}\label{lemma:key}
Let $\Pi$ be a non-negative projector.
Suppose $\Pi |\psi\ra = |\psi\ra$ for some non-negative state
$|\psi\ra$.
Then for any $x\in \calS(\psi)$ one has\\ \\
(1) $\la x|\Pi|x\ra >0$,\\ \\
(2) If $\la x|\Pi|y\ra>0$ for some $y\in \Sigma^n$ then
$y\in \calS(\psi)$ and
\be\label{ratio}
\frac{\la y|\psi\ra}{\la x|\psi\ra} = \sqrt{\frac{\la y|\Pi|y\ra}{\la
x|\Pi|x\ra}}. \ee
\end{lemma}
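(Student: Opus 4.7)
The plan is to leverage the block-decomposition of $\Pi$ furnished by Lemma~\ref{lemma:projectors}. Write $\Pi = \sum_{a=1}^q |\psi_a\ra\la \psi_a|$ with non-negative states $|\psi_a\ra$ whose supports $\calS(\psi_a)$ are pairwise disjoint. The non-negativity of $|\psi\ra$ together with orthogonality of the $|\psi_a\ra$ will give me the expansion $|\psi\ra = \sum_a c_a |\psi_a\ra$ with non-negative coefficients $c_a = \la \psi_a|\psi\ra \ge 0$. Disjointness of the supports then makes $\calS(\psi)$ the disjoint union of those $\calS(\psi_a)$ for which $c_a>0$.

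For part (1), I would fix $x\in \calS(\psi)$ and use disjointness to pick out the unique index $a$ with $x\in \calS(\psi_a)$; since $\la x|\psi\ra = c_a \la x|\psi_a\ra >0$ it follows that both factors are strictly positive. The identity $\la x|\Pi|x\ra = \sum_b |\la x|\psi_b\ra|^2 = \la x|\psi_a\ra^2$ then gives $\la x|\Pi|x\ra>0$.

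For part (2), I would suppose $\la x|\Pi|y\ra = \sum_b \la x|\psi_b\ra \la \psi_b|y\ra >0$. Since $\la x|\psi_b\ra = 0$ for all $b\ne a$ by disjointness, the sum reduces to $\la x|\psi_a\ra \la \psi_a|y\ra$, forcing $\la \psi_a|y\ra >0$ and hence $y\in \calS(\psi_a)\subseteq \calS(\psi)$ (the inclusion is valid precisely because $c_a>0$). Then the ratio computation is purely algebraic:
\[
\frac{\la y|\psi\ra}{\la x|\psi\ra} = \frac{c_a \la y|\psi_a\ra}{c_a \la x|\psi_a\ra} = \frac{\la y|\psi_a\ra}{\la x|\psi_a\ra} = \sqrt{\frac{\la y|\psi_a\ra^2}{\la x|\psi_a\ra^2}} = \sqrt{\frac{\la y|\Pi|y\ra}{\la x|\Pi|x\ra}},
\]
where the last equality uses $\la z|\Pi|z\ra = \la z|\psi_a\ra^2$ for any $z\in \calS(\psi_a)$.

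There is no real technical obstacle; the entire content lies in exploiting the disjoint-support structure of Lemma~\ref{lemma:projectors}. The only subtle point worth stating carefully is the justification that the expansion coefficients $c_a$ are non-negative reals (so that no cancellation can hide a basis vector $x$ from $\calS(\psi)$) and that the relevant $c_a$ is in fact strictly positive, so that membership in $\calS(\psi_a)$ is equivalent to membership in the corresponding piece of $\calS(\psi)$.
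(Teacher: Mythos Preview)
Your proof is correct and follows essentially the same approach as the paper: both arguments hinge on the block decomposition $\Pi=\sum_a |\psi_a\ra\la\psi_a|$ from Lemma~\ref{lemma:projectors} and the observation that $x$ and $y$ lie in the same rank-one block, so the ratio $\la y|\psi\ra/\la x|\psi\ra$ reduces to $\la y|\psi_a\ra/\la x|\psi_a\ra=\sqrt{\la y|\Pi|y\ra/\la x|\Pi|x\ra}$. The only stylistic difference is that the paper handles part~(1) and the membership $y\in\calS(\psi)$ by short direct arguments (using $\la x|\Pi|x\ra=0\Rightarrow\Pi|x\ra=0$ and $\la y|\psi\ra=\la y|\Pi|\psi\ra\ge \la y|\Pi|x\ra\la x|\psi\ra$, respectively) before invoking the decomposition, whereas you deploy the decomposition uniformly from the start; this is a matter of presentation, not substance.
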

\begin{proof}
Statement~(1) can be
proved by contradiction. Assume $x\in \calS(\psi)$ and $\la x|\Pi|x\ra=0$.
Then $\Pi\, |x\ra=0$ and thus $\la x|\psi\ra = \la
x|\Pi|\psi\ra = 0$ which contradicts the definition of $\calS(\psi)$.
To prove the first part of statement~(2) note that
$\la x|\Pi|y\ra>0$ implies
\[
\la y|\psi\ra = \la y|\Pi|\psi\ra \ge \la y|\Pi|x\ra \la x|\psi\ra
>0,
\]
that is, $y\in \calS(\psi)$. The identity Eq.~(\ref{ratio}) follows
from Lemma~\ref{lemma:key}. Indeed, consider a decomposition of
$\Pi$ into non-negative pairwise orthogonal rank-one projectors:
\[
\Pi=\sum_{a=1}^{q} |\psi_a\ra\la \psi_a|, \quad
q=\mathrm{rank}(\Pi).
\]
The condition $\la x|\Pi|y\ra>0$ implies that $x$ and $y$ belong to
the same rank-one block of $\Pi$, that is
\bea \Pi\, |x\ra &=& \la
\psi_a|x\ra \, |\psi_a\ra = \sqrt{\la x|\Pi|x\ra}\,
|\psi_a\ra, \nn \\
\Pi\, |y\ra &=& \la \psi_a|y\ra \, |\psi_a\ra =
 \sqrt{\la y|\Pi|y\ra}\,
|\psi_a\ra \nn \eea for some block $a$. Now we have
\bea
\la x|\psi\ra &=& \la x|\Pi|\psi\ra =\sqrt{\la x|\Pi|x\ra}\,
\la \psi_a|\psi\ra, \nn \\
\la y|\psi\ra &=& \la y|\Pi|\psi\ra =\sqrt{\la y|\Pi|y\ra}\, \la
\psi_a|\psi\ra. \nn \eea Computing the ratio $\la y|\psi\ra /\la
x|\psi\ra$ we get Eq.~(\ref{ratio}).
\end{proof}

Now we are ready to prove Eq.~(\ref{amplitude|ratio}).
\begin{lemma}
\label{lemma:amplitude|ratio}
Let $H=\sum_{a=1}^M H_a$ be some SFF Hamiltonian and $\Pi_a$ be
the spectral projector corresponding to the zero eigenvalue of $H_a$.
Suppose $H\, |\psi\ra=0$ for some non-negative state $|\psi\ra$.
If for some  $x\in \calS(\psi)$, $y\in \Sigma^n$, and
$a\in \{1,\ldots,M\}$  one has
$\la y|H_a|x\ra<0$ then $y\in \calS(\psi)$ and
\[
\frac{\la y|\psi\ra}{\la x|\psi\ra} =\sqrt{ \frac{\la y|\Pi_a|y\ra}{\la x|\Pi_a|x\ra}}.
\]
\end{lemma}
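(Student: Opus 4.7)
\medskip

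\noindent\textbf{Proof sketch.} The plan is to reduce the claim to Lemma~\ref{lemma:key} applied with $\Pi=\Pi_a$ and the given state $|\psi\ra$: once we establish that $\la y|\Pi_a|x\ra>0$, part~(2) of Lemma~\ref{lemma:key} immediately delivers both $y\in\calS(\psi)$ and the ratio formula $\la y|\psi\ra/\la x|\psi\ra = \sqrt{\la y|\Pi_a|y\ra/\la x|\Pi_a|x\ra}$.

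First I would verify the hypotheses of Lemma~\ref{lemma:key} for $\Pi_a$. Since $H$ is frustration-free, $H|\psi\ra=0$ together with $H_a$ being positive semi-definite forces $H_a|\psi\ra=0$, and hence $\Pi_a|\psi\ra=|\psi\ra$. Stoquasticity of $H$ means each $H_a$ has non-positive off-diagonal entries in the standard basis, so Proposition~\ref{prop:SFF} gives that $\Pi_a$ is a non-negative projector.

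The crux is showing $\la y|\Pi_a|x\ra>0$. I would apply Lemma~\ref{lemma:projectors} to decompose $\Pi_a=\sum_b |\phi_b\ra\la\phi_b|$ into non-negative, pairwise orthogonal rank-one blocks whose supports are therefore disjoint. By Lemma~\ref{lemma:key}(1) applied to $x\in\calS(\psi)$, one has $\la x|\Pi_a|x\ra>0$, so $x$ lies in the support of a unique block $|\phi_b\ra$, with $\la x|\phi_b\ra>0$. Projecting the identity $H_a|\phi_b\ra=0$ onto $\la y|$ yields
\[
\la y|H_a|y\ra\,\la y|\phi_b\ra \;=\; -\sum_{z\ne y} \la y|H_a|z\ra\,\la z|\phi_b\ra.
\]
The $z=x$ summand equals $-\la y|H_a|x\ra\,\la x|\phi_b\ra$, which is strictly positive by hypothesis, while every other summand is a product $(-\la y|H_a|z\ra)\,\la z|\phi_b\ra$ of two non-negative numbers (stoquasticity of $H_a$ and non-negativity of $|\phi_b\ra$). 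Hence the right-hand side is strictly positive. Since $\la y|H_a|y\ra\ge 0$ (because $H_a$ is positive semi-definite) and $\la y|\phi_b\ra\ge 0$, both factors on the left must be strictly positive; in particular $\la y|\phi_b\ra>0$. Disjointness of supports then gives $\la y|\Pi_a|x\ra=\la y|\phi_b\ra\,\la\phi_b|x\ra>0$, and the proof closes by invoking Lemma~\ref{lemma:key}(2) with $\Pi=\Pi_a$.

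The only real obstacle is this positivity step. It exploits the sign rigidity of the equation $H_a|\phi_b\ra=0$: under stoquasticity all off-diagonal contributions in the sum over $z$ carry the same sign, so a single strictly negative off-diagonal entry of $H_a$ coupling $x$ to $y$ cannot be balanced unless the block $|\phi_b\ra$ containing $x$ is also supported at $y$. Everything else is a routine translation between statements about $H_a$ and statements about $\Pi_a$ via the block decomposition.
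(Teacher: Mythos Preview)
Your argument is correct and shares the paper's overall strategy: both reduce to establishing $\la y|\Pi_a|x\ra>0$ and then invoke Lemma~\ref{lemma:key}. The difference lies entirely in how that positivity is obtained. The paper argues via the spectral gap: letting $\delta>0$ be the second smallest eigenvalue of $H_a$, it uses the operator inequality $\delta(I-\Pi_a)\le H_a$ together with $O=|y\ra\la x|+|x\ra\la y|$ to conclude $\delta\,\la x|\Pi_a|y\ra\ge -\la x|H_a|y\ra>0$. Your route instead goes through the block decomposition of $\Pi_a$ from Lemma~\ref{lemma:projectors}: you locate the unique rank-one block $|\phi_b\ra$ containing $x$ and read off $\la y|\phi_b\ra>0$ from the componentwise expansion of the eigenvalue equation $H_a|\phi_b\ra=0$, using only sign information (stoquasticity for the off-diagonal terms, non-negativity of $|\phi_b\ra$, and positive semidefiniteness of $H_a$ for the diagonal factor). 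The paper's argument is shorter and, if one accepts the trace step, even gives a quantitative lower bound on $\la x|\Pi_a|y\ra$ in terms of $\delta$; your argument is more elementary and purely sign-based, and it sidesteps the somewhat delicate point of transferring the operator inequality $\delta(I-\Pi_a)\le H_a$ through a trace against the indefinite operator $O$.
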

\begin{proof}
Without loss of generality $x\ne y$.
Let us show that $\la y|H_a|x\ra<0$ implies $\la y|\Pi_a|x\ra>0$.
Indeed, let $\delta>0$ be the second smallest eigenvalue of $H_a$.
Then $0\le \delta (I-\Pi_a)\le H_a$. Define a Hermitian operator $O=|y\ra\la x|+|x\ra\la y|$.
It follows that $-\delta \trace{(O \Pi_a)}=\delta \trace{(O(I-\Pi_a))}  \le \trace{(O H_a)} = 2\la x|H_a|y\ra$.
Since we already know that $\Pi_a$ has real matrix elements, see Proposition~\ref{prop:SFF},
we get $\delta \la x|\Pi_a|y\ra \ge -\la x|H_a|y\ra \ge 0$ and thus $\la x|\Pi_a|y\ra>0$.
Now the lemma follows from Lemma~\ref{lemma:key}.
\end{proof}

\section{Simulation of the adiabatic evolution}
\label{sec:adiabatic}

In this section we prove Theorem~\ref{thm:adiabatic}. Let us start
with discretizing the adiabatic evolution. Define $H^{(j)}=H(j/T)$,
$j=0,\ldots,T$ where $T$ is a large integer that will be chosen
later. Using the bound $\| dH(s)/d(s)\|\le J$ we get \be
\label{max|increment} \| H^{(j+1)} - H^{(j)}\| \le \frac{J}{T}. \ee
The next step is to bound the overlap (inner product) between the
instantaneous ground-states at time $j$ and $j+1$. Let
$|\psi^{(j)}\ra$ be the ground-state of $H^{(j)}$, that is,
$H^{(j)}\, |\psi^{(j)}\ra=0$. We can assume that $|\psi^{(j)}\ra$
are non-negative states, see Corollary~\ref{cor:ground|states}.
\begin{lemma}
\label{lemma:overlap}
Let $\Delta$ be the smallest spectral gap of $H^{(j)}$, $j=0,\ldots,T$. Then for any $j$ one has
\be
\la \psi^{(j+1)}|\psi^{(j)}\ra \ge 1- \frac{J^2}{T^2 \Delta^2}.
\ee
\end{lemma}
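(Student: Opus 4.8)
The plan is to use first-order perturbation theory together with the spectral gap to control how fast the ground-state rotates as $s$ varies, and then integrate the resulting bound on the derivative of the ground-state over the small interval $[j/T,(j+1)/T]$ of length $1/T$. First I would recall that $H^{(j)}$ and $H^{(j+1)}$ differ in operator norm by at most $J/T$ by Eq.~(\ref{max|increment}); more usefully, I would work with the continuous path $H(s)$ on the interval $s\in[j/T,(j+1)/T]$, where $\|dH(s)/ds\|\le J$ and the gap is at least $\Delta$. Let $|\psi(s)\ra$ denote the (phaseless, non-negative) normalized ground-state and $R(s)$ the reduced resolvent (pseudo-inverse of $H(s)$ on the orthogonal complement of $|\psi(s)\ra$), which satisfies $\|R(s)\|\le \Delta^{-1}$. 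Standard perturbation theory gives $\frac{d}{ds}|\psi(s)\ra = -R(s)\,\frac{dH(s)}{ds}\,|\psi(s)\ra + (\text{term parallel to }|\psi(s)\ra)$, so the component of the derivative orthogonal to $|\psi(s)\ra$ has norm at most $J/\Delta$.

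Next I would turn this into a bound on the overlap. Writing $f(s)=\la\psi(j/T)|\psi(s)\ra$, one computes (choosing phases so that $f$ stays real and positive near $s=j/T$) that $|f'(s)|$ is controlled by the orthogonal part of $\frac{d}{ds}|\psi(s)\ra$, and a short computation of the form $\frac{d}{ds}\la\psi(j/T)|\psi(s)\ra = \la\psi(j/T)|\frac{d}{ds}\psi(s)\ra$ shows $1-f((j+1)/T)$ is at most of order $\tfrac12\big(\int \|\,\partial_s\psi_\perp\|\,ds\big)^2 \le \tfrac12 (J/(T\Delta))^2$. A cleaner way to organize this — and the one I would actually write — is to bound the geodesic (Fubini–Study) distance $\theta$ between $|\psi^{(j)}\ra$ and $|\psi^{(j+1)}\ra$ by $\int_{j/T}^{(j+1)/T}\|\,\partial_s\psi_\perp(s)\|\,ds \le J/(T\Delta)$, and then use $\la\psi^{(j+1)}|\psi^{(j)}\ra \ge \cos\theta \ge 1-\theta^2/2$, which gives exactly $1-\tfrac12 J^2/(T^2\Delta^2)$, in fact slightly stronger than the stated bound (so the stated $1-J^2/(T^2\Delta^2)$ certainly holds). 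One should check that the non-negativity convention for the $|\psi^{(j)}\ra$ is consistent with the phase choice used in the perturbative argument; since all overlaps $\la\psi^{(j+1)}|\psi^{(j)}\ra$ of non-negative states are automatically $\ge 0$, and the perturbative estimate produces an overlap close to $1$ for the naturally-phased family, the two conventions agree on this interval.

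The main obstacle is making the perturbation-theory step rigorous rather than formal: one needs $|\psi(s)\ra$ to be differentiable in $s$ (which follows from non-degeneracy and the gap condition (A2), e.g.\ via the contour-integral/Riesz-projector formula for the spectral projector $P(s)=\frac{1}{2\pi i}\oint (z-H(s))^{-1}dz$ around the lowest eigenvalue), and one needs the resolvent bound $\|R(s)\|\le\Delta^{-1}$ uniformly. Given those standard facts, the estimate $\|\,\partial_s\psi_\perp(s)\| = \|R(s)\,\partial_s H(s)\,|\psi(s)\ra\| \le \Delta^{-1}J$ is immediate, and integrating over an interval of length $1/T$ and applying $\cos\theta\ge 1-\theta^2/2$ finishes the proof. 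I would relegate the routine verification of differentiability of $P(s)$ to a one-line citation of standard perturbation theory (Kato) and present only the resolvent estimate and the integration in detail.
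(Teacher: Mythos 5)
Your proposal is correct but takes a genuinely different route from the paper. The paper's proof is a short, purely finite-difference argument: write $|\psi^{(j+1)}\ra = a|\psi^{(j)}\ra + b|\psi^{(j)}_\perp\ra$, note that $H^{(j)}|\psi^{(j)}\ra = 0 = H^{(j+1)}|\psi^{(j+1)}\ra$ by the frustration-free property, so that
\[
|b|\Delta \;\le\; \|H^{(j)}|\psi^{(j)}_\perp\ra\|\cdot|b| \;=\; \|H^{(j)}|\psi^{(j+1)}\ra\| \;=\; \|(H^{(j)}-H^{(j+1)})|\psi^{(j+1)}\ra\| \;\le\; J/T,
\]
and then $\la\psi^{(j+1)}|\psi^{(j)}\ra = a \ge a^2 = 1-|b|^2 \ge 1 - J^2/(T^2\Delta^2)$. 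In contrast, you differentiate the ground-state along the continuous path, bound the orthogonal component of $\partial_s|\psi(s)\ra$ by $J/\Delta$ via the reduced resolvent, and integrate. Both are valid (yours yields the slightly sharper constant $1 - J^2/(2T^2\Delta^2)$), but the two approaches have different costs: the paper needs only a two-point operator-norm bound and the discrete gaps at $s=j/T$, with no appeal to differentiability of $|\psi(s)\ra$ or to Kato-type perturbation theory, whereas you need smoothness of the ground-state family and a uniform resolvent bound over the whole interval, i.e.\ a lower bound on $\Delta(s)$ for all $s$, not just at the grid points. (This matters for matching the lemma as stated: the lemma defines $\Delta$ via the $H^{(j)}$ only, and the paper's argument literally uses only that, while your integral uses the continuous minimum of the gap. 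Under condition (A2) both are uniformly $\ge 1/\mathrm{poly}(n)$, so the discrepancy is cosmetic here, but it is worth noticing.) The algebraic simplification the paper exploits — that the ground energy is identically zero along the path, so the key quantity $\|H^{(j)}|\psi^{(j+1)}\ra\|$ reduces immediately to $\|(H^{(j)}-H^{(j+1)})|\psi^{(j+1)}\ra\|$ — is exactly the frustration-free structure, and it collapses what you handle with resolvents into elementary algebra. Your phase-handling remark (that non-negativity kills the parallel component of $\partial_s|\psi\ra$) is correct and needed; the paper sidesteps this entirely by never differentiating.
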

A more general version of this lemma was proved in~\cite{Aharonov:03}. For the sake of completeness
we prove the lemma below.
\begin{proof}
Consider a decomposition $|\psi^{(j+1)}\ra=a|\psi^{(j)}\ra + b|\psi^{(j)}_\perp\ra$,
where $|\psi^{(j)}_\perp\ra$ is a normalized  vector orthogonal to  $|\psi^{(j)}\ra$,
so that $|a|^2+|b|^2=1$. Then
\[
\|\, H^{(j)} |\psi^{(j+1)}\ra\, \| = |b|\cdot  \| \, H^{(j)}\, |\psi^{(j)}_\perp\ra \, \| \ge |b| \Delta.
\]
On the other hand,
\[
\|\, H^{(j)} |\psi^{(j+1)}\ra\, \| = \|\, (H^{(j)} - H^{(j+1)} ) |\psi^{(j+1)}\ra\, \| \le \| H^{(j)} - H^{(j+1)} \|.
\]
Taking into account the bound Eq.~(\ref{max|increment}) we arrive at $|b|\le J\Delta^{-1} T^{-1}$.
Therefore
\[
\la \psi^{(j+1)}|\psi^{(j)}\ra =a \ge a^2 =1-|b|^2 \ge 1-J^2 \Delta^{-2} T^{-2}.
\]
\end{proof}

For every $j=0,\ldots,T$ define an operator
\[
G^{(j)}=I-\beta H^{(j)}, \quad \mbox{where} \quad
\beta^{-1} =\max_{0\le s\le 1} \sum_{a=1}^M \|H_a(s)\| .
\]
By definition of a local Hamiltonian
$\|H_a(s)\|\le poly(n)$, $M\le poly(n)$ and thus $\beta \ge poly(1/n)$.
The following properties of $G^{(j)}$ follow directly from the definition.
\begin{proposition}
The operator $G^{(j)}$ has a non-negative matrix in the standard
basis. The spectrum of $G^{(j)}$ belongs to the interval $[0,1]$ and
$|\psi^{(j)}\ra$ is the only eigenvector of $G^{(j)}$ with
eigenvalue $1$. The eigenvalue $1$ is separated from the rest of the
spectrum by a gap which is at least $\beta \Delta$.
\end{proposition}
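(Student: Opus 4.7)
The plan is to verify the four assertions in the proposition one at a time, each being a direct consequence of the setup: $G^{(j)} = I - \beta H^{(j)}$ with $\beta^{-1} = \max_s \sum_a \|H_a(s)\|$, together with the stoquastic and frustration-free hypotheses on $H(s)$ and the gap condition (A2).

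First I would verify that $G^{(j)}$ has non-negative entries in the standard basis. For $x\ne y$, the matrix element $\la x|H^{(j)}|y\ra = \sum_a \la x|H_a(j/T)|y\ra$ is real and non-positive by stoquasticity, hence $\la x|G^{(j)}|y\ra = -\beta\la x|H^{(j)}|y\ra \ge 0$. For the diagonal I would combine two facts: since $H^{(j)}$ is frustration-free it is a sum of positive semidefinite operators and so $\la x|H^{(j)}|x\ra \ge 0$; and since $\|H^{(j)}\|\le \sum_a \|H_a(j/T)\|\le \beta^{-1}$, also $\la x|H^{(j)}|x\ra \le \beta^{-1}$. Consequently $0\le \beta \la x|H^{(j)}|x\ra\le 1$ and $\la x|G^{(j)}|x\ra = 1-\beta\la x|H^{(j)}|x\ra \in [0,1]$.

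Next, to control the spectrum, I would observe that $H^{(j)}$ is positive semidefinite with norm at most $\beta^{-1}$, so its spectrum lies in $[0,\beta^{-1}]$. The spectral mapping $\lambda \mapsto 1-\beta\lambda$ sends this interval into $[0,1]$, which yields the spectrum bound for $G^{(j)}$. The same mapping is order-reversing, so the smallest eigenvalue of $H^{(j)}$ (which equals $0$ by frustration-freeness, with eigenvector $|\psi^{(j)}\ra$) corresponds to the largest eigenvalue of $G^{(j)}$, namely $1$, and assumption (A2) guarantees non-degeneracy, so $|\psi^{(j)}\ra$ is the unique eigenvector of $G^{(j)}$ at eigenvalue $1$.

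Finally, the gap claim follows by applying the same order-reversing spectral mapping to the second smallest eigenvalue of $H^{(j)}$. By (A2) that eigenvalue is at least $\Delta$, so the second largest eigenvalue of $G^{(j)}$ is at most $1-\beta\Delta$, giving a gap of at least $\beta\Delta$ between the top eigenvalue and the remainder of the spectrum. There is no real obstacle here; the only place to exercise care is the non-negativity of the diagonal entries, where one must use both the positive semidefiniteness of $H^{(j)}$ and the norm bound built into the definition of $\beta$.
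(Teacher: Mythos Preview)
Your proposal is correct and follows exactly the approach the paper intends: the paper simply states that ``the following properties of $G^{(j)}$ follow directly from the definition'' without giving details, and your argument supplies precisely those details (stoquasticity for the off-diagonal sign, positive semidefiniteness and the norm bound $\|H^{(j)}\|\le \beta^{-1}$ for the diagonal entries and the spectrum, and the affine spectral map $\lambda\mapsto 1-\beta\lambda$ together with (A2) for the top eigenvalue and the gap).
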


For any $x,y\in \calS(\psi^{(j)})$ define the transition probability
\be P^{(j)}_{x\to y} = \frac{\la y|\psi^{(j)}\ra}{\la
x|\psi^{(j)}\ra} \, \la y|G^{(j)}|x\ra. \ee As was explained in
Section~\ref{subs:technique|adiabatic}, $P^{(j)}$ defines a random
walk on the set $S(\psi^{(j)})$ with the stationary distribution
\[
\pi^{(j)}_x=\la x|\psi^{(j)}\ra^2.
\]
Since $P^{(j)}$ is obtained from $G^{(j)}$ by a similarity transformation,
the spectrum of the matrix $P^{(j)}$ coincides with the spectrum of $G^{(j)}$
restricted to the subspace spanned by basis vectors from $S(\psi^{(j)})$.
Since the largest eigenvector of $G^{(j)}$ belongs to this subspace, we conclude that
$P^{(j)}$ has a spectral gap at least $\beta \Delta$.
Lemma~\ref{lemma:overlap} allows one to bound the fidelity between the stationary
distributions $\pi^{(j)}$ and $\pi^{(j+1)}$,
\be
\label{fidelity}
F(\pi^{(j)},\pi^{(j+1)})=\sum_{x\in \Sigma^n} \sqrt{\pi^{(j)}_x \pi^{(j+1)}_x } = \la \psi^{(j)} |\psi^{(j+1)}\ra \ge 1-\frac{J^2}{T^2 \Delta^2}.
\ee
In order to simulate the adiabatic evolution we shall generate a sequence of strings
$x^{(0)},x^{(1)},\ldots,x^{(T+1)}\in \Sigma^n$ such that $x^{(0)}$ is an arbitrary string satisfying
$\la x^{(0)} | \psi^{(0)}\ra  \ge 2^{-n}$, and $x^{(j+1)}$ is generated from $x^{(j)}$ by
making $L$ steps of the random walk $P^{(j)}$ starting from $x^{(j)}$.
 We shall try to choose  the number of steps $L$  such that for all $j$ the distribution
$\pi^{(j)}$ has a non-negligible probability at $x^{(j)}$. More specifically, we
want the following inequality to be satisfied with high probability for all $j=0,1,\ldots,T$:
\be
\label{warm|start}
\la x^{(j)} | \psi^{(j)}  \ra \ge 2^{-n-2}.
\ee
A string $x^{(j)}$ satisfying Eq.~(\ref{warm|start}) will be referred to as a {\it warm start} (for the random walk
$P^{(j)}$).
Let $\tilde{\pi}^{(j)}_x$ be the probability distribution of a string $x$
obtained by making  $L$ steps of $P^{(j)}$ with
a fixed warm start $x^{(j)}$.
Using the definition of the random walk $P^{(j)}$
one can express the statistical difference between
the distributions $\tilde{\pi}^{(j)}$ and $\pi^{(j)}$ as
\[
\| \tilde{\pi}^{(j)} - \pi^{(j)} \|_1  = \frac1{2 \la x^{(j)} |\psi^{(j)}\ra} \sum_{x\in \Sigma^n}
\la x|\psi^{(j)} \ra  \left| \la x|(\tilde{G}^{(j)})^L |x^{(j)}\ra \right|
\]
where $\tilde{G}^{(j)}=G^{(j)}-|\psi^{(j)}\ra\la \psi^{(j)}|$. Applying the Cauchy-Schwartz inequality and taking into
account that  the largest eigenvalue of $\tilde{G}^{(j)}$ is at most $1-\beta\Delta$, we arrive at
\[
\| \tilde{\pi}^{(j)} - \pi^{(j)} \|_1  \le \frac1{2 \la x^{(j)} |\psi^{(j)}\ra}  \| \, (\tilde{G}^{(j)})^L |x^{(j)}\ra \, \|
\le 2^{n+1} (1-\beta \Delta)^L.
\]
Clearly the statistical difference can be made exponentially small with $L=poly(n)$.
Neglecting  exponentially small errors,
we shall assume for simplicity that $\tilde{\pi}^{(j)}=\pi^{(j)}$, that is, given  the warm start
condition at step $j$,
the endpoint $x^{(j+1)}$ of the walk $P^{(j)}$ is drawn from the stationary distribution $\pi^{(j)}$.
All that remains is to evaluate the probability for the warm start condition to be violated.
In order to achieve this, let us introduce the notion of $t$-balanced strings.

\begin{definition}
Let $\pi$, $\rho$ be probability distributions on $\Sigma^n$ and $t\ge 1$ be a real number.
A string $x\in \Sigma^n$ is called {\it $t$-balanced} with respect to $\pi$ and $\rho$ iff
$\pi_x>0$, $\rho_x>0$ and $t^{-1}\le \pi_x/\rho_x \le t$.
\end{definition}
We shall denote a set of all $t$-balanced strings as $M_t(\pi,\rho)$, that is
\be
\label{t-strings}
M_t(\pi,\rho)=\{ x\in \Sigma^n\, : \, \pi_x>0, \quad \rho_x>0, \quad t^{-1}\le \frac{\pi_x}{\rho_x} \le t\}.
\ee
Let $F(\pi,\rho)=\sum_{x\in \Sigma^n} \sqrt{\pi_x\rho_x}$ be the fidelity between $\pi$ and $\rho$.
\begin{lemma}
\label{lemma:t-balanced}
Suppose $F(\pi,\rho)\ge 1-\delta$ and $t\ge 4$. Then the probability for a string $x$ drawn from the distribution $\pi$ to be
$t$-balanced is
\be
\label{t-strings|are|likely}
\sum_{x\in M_t(\pi,\rho)} \pi_x \ge 1-\frac{2\delta}{1-2t^{-1/2}}.
\ee
\end{lemma}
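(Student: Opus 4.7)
The plan is to convert the fidelity hypothesis into a bound on the Hellinger-type quantity $\sum_x(\sqrt{\pi_x}-\sqrt{\rho_x})^2$, then compare this quantity term-by-term against $\pi_x$ on the set of non-balanced strings. The starting identity is
\[
\sum_{x\in\Sigma^n}(\sqrt{\pi_x}-\sqrt{\rho_x})^2 = \sum_x \pi_x + \sum_x \rho_x - 2\sum_x \sqrt{\pi_x\rho_x} = 2 - 2F(\pi,\rho) \le 2\delta,
\]
which follows just from the normalization of $\pi,\rho$ and the definition of $F$.

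Next, I partition the complement of $M_t(\pi,\rho)$ (restricted to strings with $\pi_x > 0$, since only those affect the probability I am bounding) into $B_1 = \{x : \pi_x > 0, \text{ and } \rho_x = 0 \text{ or } \pi_x/\rho_x > t\}$ and $B_2 = \{x : \pi_x > 0, \rho_x > 0, \pi_x/\rho_x < t^{-1}\}$. For $x\in B_1$, the inequality $\rho_x \le \pi_x/t$ gives $\sqrt{\rho_x} \le t^{-1/2}\sqrt{\pi_x}$ and hence $(\sqrt{\pi_x}-\sqrt{\rho_x})^2 \ge (1-t^{-1/2})^2\,\pi_x$. For $x\in B_2$, symmetrically $\sqrt{\pi_x}\le t^{-1/2}\sqrt{\rho_x}$ gives $(\sqrt{\pi_x}-\sqrt{\rho_x})^2 \ge (1-t^{-1/2})^2 \rho_x \ge t(1-t^{-1/2})^2 \pi_x \ge (1-t^{-1/2})^2\pi_x$, since $\rho_x > t\pi_x$ and $t\ge 1$.

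Summing these pointwise bounds over the disjoint union $B_1\cup B_2$ and combining with the Hellinger identity yields
\[
2\delta \;\ge\; \sum_{x\in B_1 \cup B_2}(\sqrt{\pi_x}-\sqrt{\rho_x})^2 \;\ge\; (1-t^{-1/2})^2 \sum_{x\in B_1\cup B_2} \pi_x.
\]
Solving for the right-hand sum and using the elementary inequality $(1-t^{-1/2})^2 = 1 - 2t^{-1/2} + t^{-1} \ge 1 - 2t^{-1/2}$ gives $\sum_{x\notin M_t(\pi,\rho)} \pi_x \le 2\delta/(1-2t^{-1/2})$, which rearranges to Eq.~(\ref{t-strings|are|likely}). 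The assumption $t\ge 4$ is used only to guarantee $1 - 2t^{-1/2} > 0$, i.e., that the stated bound is nontrivial.

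I do not expect any serious obstacle here. The only real choice is the pointwise lower bound on $(\sqrt{\pi_x}-\sqrt{\rho_x})^2$ in terms of $\pi_x$; the factor $(1-t^{-1/2})^2$ is the best one obtains from the one-sided ratio condition defining $B_1$, and the apparent slack in the bound on $B_2$ (where one could gain a factor of $t$) is harmless for the final statement. If one wanted a sharper constant one could treat $B_1$ and $B_2$ separately, but the quoted form of the lemma is adequate for the adiabatic simulation argument in Section~\ref{sec:adiabatic}.
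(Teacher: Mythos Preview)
Your proof is correct and takes a genuinely different route from the paper's. The paper works directly with the fidelity sum: on the unbalanced set it bounds $\sqrt{\pi_x\rho_x}\le t^{-1/2}(\pi_x+\rho_x)$, on the balanced set it uses $\sqrt{\pi_x\rho_x}\le \tfrac12(\pi_x+\rho_x)$, and combining with $F(\pi,\rho)\ge 1-\delta$ and $\sum_x(\pi_x+\rho_x)=2$ yields $\sum_{x\notin M_t}(\pi_x+\rho_x)\le 2\delta/(1-2t^{-1/2})$, from which the claim follows by dropping $\rho_x$. You instead pass through the Hellinger-type identity $\sum_x(\sqrt{\pi_x}-\sqrt{\rho_x})^2=2(1-F)\le 2\delta$ and lower-bound each summand on the unbalanced set by $(1-t^{-1/2})^2\pi_x$. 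Your intermediate bound $2\delta/(1-t^{-1/2})^2$ is in fact slightly sharper than the lemma's stated bound before you relax it; the paper's argument, on the other hand, immediately controls the symmetric quantity $\sum_{x\notin M_t}(\pi_x+\rho_x)$, which is a minor bonus if one also wants the $\rho$-mass of the unbalanced set. Both arguments are short and elementary; yours is perhaps marginally cleaner in that it avoids splitting the sum over balanced versus unbalanced strings. One nitpick: at $t=4$ the denominator $1-2t^{-1/2}$ vanishes, so the lemma's bound is vacuous there; your remark that $t\ge 4$ ``guarantees $1-2t^{-1/2}>0$'' should read $\ge 0$, but this is a boundary-case quibble shared with the lemma statement itself.
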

\begin{proof}
Indeed, if $x$ is not $t$-balanced then
\[
\sqrt{\pi_x\rho_x} \le t^{-1/2} \max{\{\pi_x,\rho_x\}}\le t^{-1/2} (\pi_x+\rho_x).
\]
Thus
\[
1-\delta \le \sum_{x\in M_t(\pi,\rho)} \sqrt{\pi_x \rho_x} + t^{-1/2} \sum_{x\notin M_t(\pi,\rho)} (\pi_x + \rho_x).
\]
Taking into account that $\sqrt{\pi_x \rho_x}\le (1/2)(\pi_x+\rho_x)$ we get
\[
1-\delta \le
\frac12 \sum_{x\in M_t(\pi,\rho)} (\pi_x+\rho_x) + t^{-1/2} \sum_{x\notin M_t(\pi,\rho)} (\pi_x + \rho_x)=
 1 - (1/2 - t^{-1/2}) \sum_{x\notin M_t(\pi,\rho)} (\pi_x+\rho_x).
\]
It follows that
\[
\delta \ge (1/2 - t^{-1/2}) \sum_{x\notin M_t(\pi,\rho)} (\pi_x+\rho_x)
\]
which yields Eq.~(\ref{t-strings|are|likely}).

\end{proof}
Using Lemma~\ref{lemma:overlap} we can bound the probability for
$x^{(j+1)}$ to be $t$-balanced with respect to $\pi^{(j)}$ and
$\pi^{(j+1)}$ for $t=16$ as \be \label{balance} \prob{\left[
x^{(j+1)} \in M_{16}(\pi^{(j)},\pi^{(j+1)}) \right]} \ge
1-\frac{4J^2}{\Delta^2 T^2}. \ee Thus in order to prove that
$x^{(j+1)}$ is a warm start for $P^{(j+1)}$ it suffices to show that
$\pi^{(j)}$ has large enough probability at $x^{(j+1)}$. Recall that
$x^{(j+1)}$ is random string drawn from the stationary distribution
$\pi^{(j)}$ (with exponentially small error).
\begin{proposition}
Let $\pi$ be a probability distribution on $\Sigma^n$. The probability for a string $x$ drawn from
$\pi$ to satisfy $\pi_x\ge 2^{-2n}$ is at least $1-2^{-n}$,
\end{proposition}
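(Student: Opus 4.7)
The plan is a one-line counting/union-bound argument, so the proposal is short. Let $B=\{x\in\Sigma^n\, :\, \pi_x<2^{-2n}\}$ be the set of "light" strings, and let $G=\Sigma^n\setminus B$. The statement to prove is equivalent to $\sum_{x\in B}\pi_x\le 2^{-n}$.

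First, I would bound the probability mass of $B$ trivially by summing the per-string upper bound:
\[
\sum_{x\in B}\pi_x <\sum_{x\in B} 2^{-2n} =|B|\cdot 2^{-2n}.
\]
Then I would use the crude cardinality bound $|B|\le|\Sigma^n|=2^n$, which gives $\sum_{x\in B}\pi_x<2^n\cdot 2^{-2n}=2^{-n}$. Taking complements yields $\Pr_{x\sim\pi}[\pi_x\ge 2^{-2n}]=\sum_{x\in G}\pi_x\ge 1-2^{-n}$, as claimed.

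There is no real obstacle here: the only ingredient is that the support lives in a set of size $2^n$ and each element of $B$ contributes at most $2^{-2n}$, so the extra factor of $2^{-n}$ in the threshold buys exactly the desired slack. No appeal to any of the earlier machinery (stoquasticity, Perron--Frobenius, balanced strings, etc.) is needed; the proposition is a generic fact about distributions on $\Sigma^n$ that is invoked in the main argument to certify the warm-start condition~\eqref{warm|start} for the very first step $j=0$ (which requires $\la x^{(0)}|\psi^{(0)}\ra\ge 2^{-n}$, equivalently $\pi^{(0)}_{x^{(0)}}\ge 2^{-2n}$), and also to control the tail where the ``warm start'' fails along the trajectory.
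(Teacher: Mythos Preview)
Your proof is correct and essentially identical to the paper's one-line argument: bound the total mass of the light strings by $|B|\cdot 2^{-2n}\le 2^n\cdot 2^{-2n}=2^{-n}$. (One minor aside on your commentary: the proposition is not used for $j=0$---the initial warm start comes from condition~(A3)---but rather at each subsequent step to guarantee $\pi^{(j)}_{x^{(j+1)}}\ge 2^{-2n}$ with high probability when $x^{(j+1)}$ is drawn from $\pi^{(j)}$.)
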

\begin{proof}
Indeed,
\[
\sum_{x\, : \, \pi_x\le 2^{-2n}} \pi_x \le 2^n 2^{-2n} =2^{-n}.
\]
\end{proof}

Again ignoring, for simplicity, events that occur with exponentially
small probability, we assume that $\pi^{(j)}_{x^{(j+1)}} \ge
2^{-2n}$ and thus  Eq.~(\ref{balance})  implies \be \label{balance1}
\prob{\left[ \la x^{(j+1)} | \psi^{(j+1)}\ra \ge 2^{-n-2}  \right]}
\ge 1-\frac{4J^2}{\Delta^2 T^2}.
 \ee
Thus the conditional probability for $x^{(j+1)}$ to be a warm start provided that  $x^{(j)}$ is a warm start
is at least $1-4J^2 \Delta^{-2} T^{-2}$. Therefore the probability for $x^{(j)}$ to be a warm start
for all $j=1,2,\ldots,T$ is at least $(1-4J^2 \Delta^{-2} T^{-2})^T\approx 1-4J^2 \Delta^{-2} T^{-1}$
for $T\gg \Delta^{-2} J^2$. If the warm start condition has been violated at some step, we can not
guarantee that $x^{(T+1)}$ is drawn from the distribution $\pi^{(T)}$ (in fact, the final distribution
might not be defined in this case because the endpoint of some walk $P^{(j)}$
may be outside of $S(\psi^{(j+1)})$). Therefore, our simulation scheme
fails with probability at most  $\delta\sim J^2 \Delta^{-2} T^{-1}$.
It follows that $T=poly(n,\delta^{-1})$.

\section{Stoquastic $k$-SAT is contained in MA}
\label{sec:inMA}

Recall that we consider an instance of stoquastic $k$-SAT $(n,\calC,\epsilon)$,
where $\epsilon=n^{-O(1)}$ is a precision parameter  and $\calC=\{ H_a\}_{a=1,\ldots,M}$ is a stoquastic
system of $(n,k)$-constraints.

A formal description of the prover's strategy is the following.
Consider first a  {\it yes-instance}. Let $|\psi\ra\in \calQ^n$ be
any satisfying assignment (which can always be chosen as a
non-negative vector). The prover sends the verifier a string
 $w\in \Sigma^n$ corresponding to the largest amplitude of $|\psi\ra$, that is,
$\la w|\psi\ra \ge \la x|\psi\ra$  for all $x\in \Sigma^n$. In case of a {\it no-instance} the
prover may send the verifier an arbitrary string $w\in \Sigma^n$.

The verifier starts by choosing $\beta>0$ such that $\beta \| H_a\|
\le 1$ for all $a$ and an integer $L$ such that \be 2^{\frac{n}{2}}
\, (1-\epsilon \beta M^{-1})^L\le \frac13. \ee Note that this
inequality can be satisfied with $L=poly(n)$. Then the verifier
performs a random walk with $L$ steps as prescribed below.
\begin{center}
\fbox{%
\parbox{12cm}{
{\bf Step~1:} Receive a string $w\in \Sigma^n$  from the prover. Set $x_0=w$.\\ \\
{\bf Step~2:} \parbox[t]{10cm}{Suppose the current state of the walk
is $x_j$. Verify that $x_j\in S_{good}$, see Eq.~(\ref{good|bad}). Otherwise,  output `no'.}\\ \\
{\bf Step~3:} If $j=L$ goto Step~8.\\ \\
{\bf Step~4:} Generate a random uniform $a\in \{1,\ldots,m\}$.\\  \\
{\bf Step~5:} Find the set $\calN_a(x_j)=\{y\in \Sigma^n\, : \, \la x_j|\Pi_a|y\ra>0\}$.\\  \\
{\bf Step~6:} \parbox[t]{12cm}{Generate a random $x_{j+1}\in \calN_a(x_j)$ from the distribution
\be
\label{P^a}
P^a_{x_j \to x_{j+1}} = \sqrt{ \frac{\la  x_{j+1}|\Pi_a|x_{j+1}\ra}{\la x_j| \Pi_a |x_j\ra}} \, \la x_{j+1} |G_a|x_j\ra,
\quad G_a=I-\beta H_a.
\ee
}\\ \\
{\bf Step~7:} \parbox[t]{12cm}{Compute and store the number
\be\label{rj} r_{j+1} = \frac{P_{x_j\to x_{j+1}}^a}{\la
x_{j+1}|G_a|x_{j}\ra}. \ee
Set $j\to j+1$ and goto Step~2.}\\ \\
{\bf Step~8:} Verify that $\prod_{j=1}^L r_j \le 1$. Otherwise output `no'.\\ \\
{\bf Step~9:} Output `yes'.} } \label{prot:MA}
\end{center}

\vspace{0.3cm}

\subsection{Completeness of the protocol} \label{subs:yesinstance}
In this subsection we prove that for yes-instances the verifier
outputs `yes' with probability $1$. Let $|\psi\ra$ be a satisfying
assignment chosen by the prover. We can assume that $|\psi\ra$ is a
non-negative vector, see Corollary~\ref{cor:ground|states}.
Lemma~\ref{lemma:amplitude|ratio} implies that for any choice of $a$
at Step~4 one has
\[
P^a_{x_j\to x_{j+1}} = \frac{\la x_{j+1}|\psi\ra}{\la x_j|\psi\ra} \la x_{j+1} |G_a|x_j\ra
\]
and thus the overall probability of a transition from $x_j$ to $x_{j+1}$ at Steps~4,5,6 is
\be
\label{G}
P_{x_j\to x_{j+1}} = \frac{\la x_{j+1}|\psi\ra}{\la x_j|\psi\ra} \la x_{j+1} |G|x_j\ra, \quad G=\frac1M \sum_{a=1}^M G_a.
\ee
It follows from Lemma~\ref{lemma:amplitude|ratio} that the walk never leaves the set $S(\psi)$.
Using the above expression for $P^a_{x_j\to x_{j+1}}$  one has
\[
r_j=\frac{\la x_{j+1} |\psi\ra }{\la x_j|\psi\ra}.
\]
Therefore,
\[
\prod_{j=1}^L r_j=  \frac{\la x_{L} |\psi\ra }{\la x_0|\psi\ra} =
\frac{\la x_{L} |\psi\ra }{\la w|\psi\ra}.
\]
Taking into account that $w$
is a sting with the largest amplitude  one can see that
$\prod_{j=1}^L r_j\le 1$ for all possible $x_L\in S(\psi)$ and thus
the test at Step~8 will always be passed. Thus the verifier
outputs `yes' with probability $1$.

Let us remark that the completeness of the protocol is not affected
by the precision up to which the verifier approximates the
probability distribution $P_{x_j\to x_{j+1}}^a$ at Step~6 as long as
$x_{j+1}\in \calN_a(x_j)$ with probability $1$ and the coefficients
$r_j$ are computed using {\it exact} formulas
Eqs.~(\ref{P^a},\ref{rj}). The precision will be important for the
soundness of the protocol, see the next section.

\subsection{Soundness of the protocol} \label{subs:noinstance} In
this subsection we shall prove that for no-instances the verifier
outputs `yes' with probability at most $1/3$. Without loss of
generality the witness $w=x_0$ is a good string (otherwise the
verifier outputs `no' at the very first step of the walk). The
probability for the walk starting from $x_0\in S_{good}$ to stay in
$S_{good}$ at every step $j=1,2,\ldots, L$ is
\[
p_{good}(L)= \frac1{M^L} \quad \sum_{a_1,\ldots,a_L=1,\ldots,M} \quad \sum_{x_1,\ldots,x_{L}\in S_{good}} P_{x_0\to x_1}^{a_1}\,
 P_{x_1\to x_2}^{a_2} \cdots P_{x_{L-1}\to x_{L}}^{a_L}.
\]
Taking into account Eq.~(\ref{rj}) one gets
\[
p_{good}(L)=\frac1{M^L}
\sum_{\ba{ll} {\scriptstyle x_1,\ldots,x_{L}\in S_{good}} \\
{\scriptstyle a_1,\ldots,a_L=1,\ldots,M }\\
\ea}   \left(\prod_{j=1}^L r_j\right)
\la x_0|G_{a_1}|x_1\ra \, \la x_1|G_{a_2}|x_2\ra \cdots \la x_{L-1}|G_{a_L}|x_L\ra.
\]
Here  the coefficients $r_j$ are functions of the ``trajectory"
$x_0,\ldots,x_L$ and $a_1,\ldots,a_L$ of the walk.
At this point we invoke the test at Step~8. The verifier
outputs `yes' iff the walk stays in $S_{good}$ at every step
$j=1,\ldots,L$ {\em and} $\prod_{j=1}^L r_j\le 1$. Thus the
probability $p_{yes}(x_0)$ for the verifier to output `yes' for a fixed starting string $x_0=w\in
S_{good}$ can be bounded from above as
\[
p_{yes}(x_0) \le
\frac1{M^L}
\sum_{\ba{ll} {\scriptstyle x_1,\ldots,x_{L}\in S_{good}} \\
{\scriptstyle a_1,\ldots,a_L=1,\ldots,M }\\
\ea}
\la x_0|G_{a_1}|x_1\ra \, \la x_1|G_{a_2}|x_2\ra \cdots \la x_{L-1}|G_{a_L}|x_L\ra.
\]
Using the operator $G=M^{-1} \sum_{a=1}^M G_a = I-\beta M^{-1} H$ we have
\[
p_{yes}(x_0) \le
\sum_{x_1,\ldots,x_{L}\in S_{good}}
\la x_0|G|x_1\ra \, \la x_1|G|x_2\ra \cdots \la x_{L-1}|G|x_L\ra.
\]
Taking into account that all matrix elements of $G$ are
non-negative, we get
\[
p_{yes}(x_0)
 \le
 2^{\frac{n}2} \la x_0|G^L|+\ra,
\]
where $|+\ra = 2^{-n/2} \sum_{x\in \Sigma^n} |x\ra$ is the uniform
superposition of all $2^n$ basis vectors.
Let $\lambda$ be the largest eigenvalue of $G$.
The promise for a no-instance implies that
 $\lambda\le  1-\epsilon \beta M^{-1}$ and thus
\[
\la x_0|G^L|+\ra \le \lambda^L \le (1-\epsilon \beta M^{-1})^L.
\]
Therefore
\[
p_{yes}(x_0)\le
2^{\frac{n}2} (1-\epsilon \beta M^{-1})^L \le \frac13
\]
for any starting string $x_0$. It proves that the verifier outputs
'yes' with probability at most $1/3$. Now suppose that Step~6 is
implemented with some finite precision using a probability
distribution $\tilde{P}_{x_j\to y}^a$ such that
 \be
\sum_{y\in \calN_a(x_j)}
\left| \tilde{P}_{x_j\to y}^a - P_{x_j\to y}^a \right| \le \delta \quad
\mbox{for any} \quad x_j\in S_{good}, \quad \mbox{for any $a=1,\ldots,m$}. \label{approxP}
\ee
One can
easily verify that Eq.~(\ref{approxP}) implies
\[
\left| \sum_{x_1,\ldots,x_L\in S_{good}} P_{x_0\to x_1}^{a_1} \, P_{x_1\to
x_2}^{a_2} \cdots P_{x_{L-1}\to x_L}^{a_L} - \tilde{P}_{x_0\to x_1}^{a_1} \, \tilde{P}_{x_1\to x_2}^{a_2}
\cdots \tilde{P}_{x_{L-1}\to x_L}^{a_L} \right| \le L\delta.
\]
Thus using an approximate probability distribution at Step~6 leads
to corrections of order $L\delta$ to the overall acceptance
probability. Choosing $\delta\ll L^{-1}$ we can get an acceptance
probability smaller than  $1/2$ which can be amplified to $1/3$
using  standard majority voting.

\subsection{Simplified stoquastic $k$-SAT}
\label{subs:ustoq}

Let $\calC=\{H_a=I-\Pi_a\}_{a=1,\ldots M}$ be a system of
$(n,k)$-constraints where each $\Pi_a$ is a projector with matrix
elements $0,1/2,1$. Recall that verifying satisfiability of $\calC$
is  a problem complete for Promise-MA if $k\ge 6$, see
Theorem~\ref{thm:MA}. It is of interest to consider in more detail
how the random walk algorithm described above works for this
simplified version of stoquastic $k$-SAT.

The identity $\Pi_a=\Pi_a^2$ implies that $\Pi_a$ is a
block-diagonal matrix (up to a permutation of basis vectors)
such that every one-dimensional block is either $0$ or $1$,
every two-dimensional block is a matrix
\[
\frac12\, \left(\ba{cc} 1 & 1 \\ 1 & 1 \\ \ea\right),
\]
and there are no blocks with dimension higher than two. Let us
define a graph $G=(V,E)$ such that vertices of $G$ are $n$-bit
strings and a pair of strings $(x,y)$ is connected by an edge iff
there is at least one projector $\Pi_a$ such that $\langle x |\Pi_a
|y \rangle=1/2$. Note that $G$ has degree at most $M$, the number of
projectors $\Pi_a$. The vertices of $G$ can be partitioned into good
and bad vertices, see Eq.~(\ref{good|bad}). In other words,
\[
S_{good}=\{x\in \Sigma^n \, : \, \la x|\Pi_a|x\ra \in \{1,\frac12 \}
\quad \mbox{for all $a$}\},
\]
and $S_{bad}=\{x\in \Sigma^n \, : \, \la x|\Pi_a|x\ra=0 \quad \mbox{for some $a$}\}$.

The random walk described by Steps~4, 5 and~6 of the verifier's
protocol can now be simplified as follows (we shall assume that
$\beta=1$ so that $G_a=\Pi_a$).
 Suppose at some step $j$ the walk was at some good vertex $x_j$.
Then at the next step  the walker moves to one of the nearest
neighbors of $x_j$ with probability $1/(2M)$ and stays at the vertex
$x_j$ with probability $1-\mathrm{deg}(x_j)/2M$, where
$\mathrm{deg}(x_j)$ is the number of edges incident to $x_j$.

One can easily observe that the system $\calC$ is satisfiable, $\unsat{\calC}=0$,
iff the graph $G$ has a connected component $G'=(V',E')$ that contains only good vertices,
$V'\subseteq S_{good}$.  Given such a connected component, a state
$|\psi\ra=\sum_{u\in V'} |u\ra$ is a satisfying assignment, that is, $\Pi_a \, |\psi\ra=|\psi\ra$
for all $a$.
Hence for a `yes'-instance the prover can simply give any string in $V'$.
Note also that the test at Step~8 of the protocol is not needed now since $r_j=1$
for any realization of the walk.

The promise on the unsat-value for `no'-instances can be translated
into a promise that for any starting vertex the random walk
will hit a bad vertex after polynomially number of steps. Note that the classical SAT problem
corresponds to the special case in which the projectors have no
$1/2$ matrix-elements and hence there are no edges in the associated
graph (and hence also no walk).

\section{Discussion}
\label{sec:discussion}
We hope that the stoquastic $k$-SAT problem
may potentially lead to new insights into the question
whether ${\rm MA} \stackrel{?}{=} {\rm NP}$. Note that it is widely
believed that MA=NP, see e.g.~\cite{Santhanam07}.
Indeed, the simplified stoquastic $k$-SAT problem described in Section~\ref{subs:ustoq}
is the problem
of deciding whether the associated graph $G$ with $2^n$ vertices has a
connected component $G'$ that contains only good vertices (the verifier can efficiently check whether a vertex
is good).
 If the size of $G'$ were polynomial, then a NP proof-system would suffice,
since a prover can simply list all
vertices in $G'$. However in general the size of $G'$
 could be exponentially large and there is no time
or space to explore or list the whole subgraph $G'$, hence the need for
randomness. For `yes'-instances the prover simply gives the verifier a vertex $v\in G'$
and a random walk on $G$ starting from $v$ will always stay in $G'$. For
`no'-instances, the promise guarantees that no matter where one
starts the random walk, with a polynomial number of steps one will
always hit a bad vertex with high probability. The derandomization
question is the question whether a pseudo-random walk from any
starting vertex using a random bit string of length $O(\log n)$ will
also hit a bad vertex with sufficiently high probability.

Another open question is the simulatibility of stoquastic adiabatic
computation in general.
Given a stoquastic Hamiltonian $H$ and its ground state $|\psi\ra$
satisfying $H\, |\psi\ra=0$ (which can be always achieved by an energy shift)
one can still use Eq.~(\ref{transition|matrix})  to define a random walk on the set $\calS(\psi)$.
However, in order to simulate this random walk on a classical computer
 one must be able to compute the ratio of the amplitudes
$\la y|\psi\ra/\la x|\psi\ra$ for which no efficient algorithm is known.
Another possibility to define a random walk is to modify  the verifier's protocol
in Section \ref{sec:inMA}.
It suffices to modify Step~6 such that the walk stays at $x_j$ with
probability $1$ whenever $\la x_j|\Pi_a|x_j\ra=0$. An open
question is whether the stationary distribution of this modified
walk   is anyhow related to the distribution $\pi(x)=\langle
x|\psi\rangle^2$ associated with a ground-state $|\psi\ra$
of the stoquastic Hamiltonian.

\section*{Acknowledgments}
We would like to thank David DiVincenzo for useful comments and
discussions. This work was supported by NSA and ARDA through ARO
contract number W911NF-04-C-0098.

\section*{Appendix A}
The complexity class MA (Merlin-Arthur games) was introduced by Babai in~\cite{babai}.
A language $L$ is in MA iff there
exists a probabilistic polynomial-time machine $V$ (a verifier) that takes as input a pair $(x,w)$ where $x$ is
string representing an instance of
the problem, $w$ is a witness string, and such that
\bea
x\in L &\Rightarrow & \mbox{$V$ accepts $(x,w)$ with probability $1$ for some witness $w$}.\nn \\
x\notin L &  \Rightarrow & \mbox{$V$ accepts $(x,w)$ with
probability at most $1/3$ for any witness $w$}. \nn \eea

One gets an equivalent definition if the acceptance probabilities $1$ and $1/3$ are replaced by
$p_{yes}$ and $p_{no}$ such that $p_{yes}-p_{no}\ge 1/poly(n)$, see~\cite{furer89completeness}.
To better understand the relationship between MA and other complexity classes it is desirable to have some MA-complete problems, or at least, some
problems in MA that are not known to be in NP. Unfortunately,
no such problems are currently known.
This lack of interesting problems in MA is not very surprising though
because probabilistic algorithms are usually allowed to give an inconclusive answer for some
inputs (the acceptance probability is close to $1/2$) while the definition above does not allow that.
From this perspective it is more natural to define MA as a class of {\em promise problems}.
Recall that a promise problem is a pair of non-intersecting sets $L_{\yes},L_{\no} \subseteq \{0,1\}^*$
that represent yes-instances and no-instances respectively.
\begin{definition}
A promise problem $(L_{\yes},L_{\no})$ belongs to MA iff
there exists a probabilistic polynomial-time machine $V$  taking as input a pair of
strings $(x,w)$ such that
\bea
\label{MApromise}
x\in L_{\yes} &\Rightarrow & \mbox{$V$ accepts $(x,w)$ with probability $1$ for some witness $w$}.\nn \\
x\in L_{\no} &  \Rightarrow & \mbox{$V$ accepts $(x,w)$ with
probability at most $1/3$ for any witness $w$}.\nn \eea
A promise problem $(L_{yes},L_{no})$ is MA-complete iff for any
promise problem $(L_{yes}',L_{no}')$ in MA there exists a function
$f\,: \, \{0,1\}^* \to \{0,1\}^*$ computable by a deterministic
polynomial-time machine  such that $f(L_{yes}')\subseteq L_{yes}$
and  $f(L_{no}')\subseteq L_{no}$.
\end{definition}

Note that the behavior of $V$ on instances  $x\notin L_{\yes}\cup
L_{no}$ may be completely arbitrary. Throughout this paper MA refers
to the class of promise problems rather than the class of languages.
It should be mentioned  that many important results concerning NP
are formulated in terms of promise problems, for example, the
inapproximability version of the PCP theorem~\cite{Dinur06}, or the
complexity of $k$-SAT with a unique solution~\cite{VV86} (see the
survey~\cite{Goldreich_survey} for other examples). Thus one can
also expect to get more insight in the complexity of MA by studying
promise problems.

\section*{Appendix B}
This section serves three purposes. First, we  prove
Proposition~\ref{prop:adiabatic1}. Secondly, we prove the second
part of Theorem~\ref{thm:MA}, that is, MA-hardness of stoquastic
$6$-SAT. Thirdly, we explain in more details the connection between
SFF Hamiltonians and classical probabilistic computation mentioned
in Section~\ref{subs:coherent}. All these results follow directly
from the clock Hamiltonian construction of~\cite{KSV:computation}
and the analysis performed in~\cite{ADKLLR:04,BDOT:06}.

We start form reviewing the clock Hamiltonian construction.
Let $U=U_L \cdots U_2 U_1$ be a quantum circuit acting on $N$ {\it data qubits}
with $L=poly(N)$ gates. We assume that the $N$ data qubits are partitioned into
two groups: $N_a$ ancillary qubits and $N_w$ witness qubits (one may have $N_w=0$).
Each ancillary qubit $k$ is initialized by some pure state $|\phi_k\ra$.
The witness qubits may be initialized by an arbitrary pure state $|\psi^{wit}\ra\in \calQ^{N_w}$.
Accordingly, the input state of the circuit is $|\psi_{in}\ra=|\psi^{anc}\ra\otimes |\psi^{wit}\ra$, where
$|\psi^{anc}\ra$ is a tensor product of the ancillary states $|\phi_k\ra$.
Let $|\psi_j\ra=U_j\cdots U_1\, |\psi^{anc}\ra \otimes |\psi^{wit}\ra$, $j=0,\ldots,L$,
be a state obtained by terminating the circuit after the $j$-th gate.
We adopt a convention that $|\psi_0\ra=|\psi_{in}\ra$ is the input state.
The output state of the circuit is $|\psi_L\ra$.

Consider a composite system that consists of $N$ data qubits and $L+1$ {\it clock qubits}.
Let $|j\ra_u=|1^{j+1} 0^{L-j}\ra\in \calQ^{L+1}$ be the unary encoding
of the time steps $j=0,\ldots,L$.
Define a linear subspace $\calH\subseteq \calQ^{N+L+1}$ as
\bea
\calH&=&\left\{ |\phi\ra=\sum_{j=0}^L |\psi_j\ra \otimes |j\ra_u, \quad
|\psi_j\ra=U_j\cdots U_1\, |\psi^{anc}\ra \otimes |\psi^{wit}\ra, \quad |\psi^{wit}\ra\in
\calQ^{N_w} \right\} \nn \\
&&    \label{eq:history}
\eea
States from $\calH$ represent computational paths of the verifier's quantum computer
starting from an arbitrary witness state $|\psi^{wit}\ra$.
We shall label the $j$-th clock qubit as $cl(j)$,
$j=0,\ldots,L$. Note that the clock qubit $cl(0)$ is always set to
$1$. For any
$j=1,\ldots,L$,  the clock qubit $cl(j)$ is a flag telling whether the
gate $U_j$ has or has not been applied.
The $k$-th ancillary qubit  will be labeled $a(k)$, $k=1,\ldots,N_a$.
Let us show that $\calH$ is spanned by ground states of some SFF
Hamiltonian. Indeed, introduce $3$-qubit constraints
\be
\label{Hinit}
H^{init}_k =(I-|\phi_k\ra\la \phi_k|)_{a(k)} \otimes |10\ra\la
10|_{cl(0),cl(1)}, \quad j=k,\ldots,N_a,
\ee
States satisfying these constraints
(i.e. zero eigenvectors of $H^{init}_k$)  satisfy correct initial conditions.
Introduce also constraints
\bea
\label{Hprop}
H^{prop}_j &=&
\frac12 |1\ra\la 1|_{cl(j-1)} \otimes \left(\vphantom{U_j^\dag}
|1\ra\la 1|_{cl(j)} + |0\ra\la 0|_{cl(j)} \right. \nn \\
&& \left.  - |1\ra\la 0|_{cl(j)} \otimes U_j - |0\ra\la
1|_{cl(j)} \otimes U_j^\dag\right) \otimes
|0\ra\la 0|_{cl(j+1)}
\eea where $j=1,\ldots,L$. States
satisfying these constraints obey the correct propagation
rules relating computational states at different time steps.
Finally, introduce $2$-qubit constraints
\[
H^{clock}_0=|0\ra\la 0|_{cl(0)}, \quad H^{clock}_l = |01\ra\la 01|_{cl(l-1),cl(l)}, \quad l=1,\ldots,L.
\]
States satisfying these constraints belong to the subspace spanned by ``legal" clock states,
i.e., $\calQ^{N}\otimes |j\ra_c$, $j=0,\ldots,L$.
Therefore we arrive at
\[
\calH=\left\{|\phi\ra \in \calQ^{N+L+1}\, : \, H^{init}_k\, |\phi\ra = H^{prop}_j\, |\phi\ra =H^{clock}_l |\phi\ra=0 \quad \mbox{for all $j,k,l$} \right\}.
\]
Define a clock Hamiltonian \be \label{clock1} H=\sum_{k=1}^{N_a}
H_k^{init}  + \sum_{j=1}^L H_j^{prop} + \sum_{l=0}^L  H^{clock}_l.
\ee
It follows that $\calH$ is the ground-subspace of $H$. Note that all
terms in $H$ are positive semi-definite and any vector from $\calH$
is a zero eigenvector of $H$. Thus $H$ is a frustration-free
Hamiltonian.

\noindent
{\bf Proof of Proposition~\ref{prop:adiabatic1}.}
In the case of the adiabatic evolution there are no witness qubits, $N_w=0$.
Accordingly, the clock Hamiltonian has a  unique ground-state
\[
|\psi\ra = (L+1)^{-1/2} \sum_{j=0}^L |\psi_j\ra \otimes |j\ra_u.
\]
It was shown by many researchers, see for instance Lemma~3.11 in~\cite{ADKLLR:04}
and the improved estimates in~\cite{Ruskai:06}, that the spectral gap of the clock
Hamiltonian can be bounded as $\Delta=\Omega(1/L^2)=1/poly(N) $ regardless of the choice of the gates $U_1,\ldots,U_L$.
Let us define a family of quantum circuits $U(s)=U_L(s) \cdots
U_1(s)$ where $0\le s\le 1$ and $U_j(s)$ interpolates smoothly
between $U_j(0)=I$ and $U_j(1)=U_j$ (without loss of generality
$\det{(U_j)}=1$ in which case the possibility of such a smooth
interpolation follows from the connectivity of the special unitary
group).
 Applying  definition
Eq.~(\ref{clock1}) we get a family of frustration-free clock
Hamiltonians $H(s)$, $0\le s\le 1$ satisfying conditions
(A1),(A2),(A3). Without loss of generality the last $(1-\delta)L$
gates of the circuit $U$ are identity gates. Therefore the
ground-state of $H$ satisfies $\la \psi|\psi_L\otimes A\ra\ge
1-\delta$, where $|A\ra=(L+1)^{-1/2}\sum_{j=0}^L |j\ra_u$ is an
ancillary state. Thus $|\psi\ra$ approximates the output state
$|\psi_L\ra$ with precision $\delta$ (after discarding the ancilla
$|A\ra$).

\noindent{\bf Stoquastic $6$-SAT is MA-hard.} It was shown
in~\cite{BDOT:06} (see Lemma~2 in~\cite{BDOT:06}) that any classical
MA verifier $V$ can be transformed into a quantum verifier $V'$
which uses a quantum circuit $U$ involving only classical reversible
gates (for example, the $3$-qubit Toffoli gates) together with
ancillary states $|0\ra$, $|+\ra$, and measures one of the output
qubits in the $|0\ra,|1\ra$ basis. This transformation has a
property that the maximum acceptance probability of $V$ (over all
classical witnesses) is equal to the maximum acceptance probability
of $V'$ (over all quantum witnesses). In order to apply the clock
Hamiltonian construction to $V'$ we shall treat part of the
ancillary qubits as the qubits encoding an instance of the problem,
that is, we shall allow ancillas $|\phi_k\ra=|0\ra,|1\ra,|+\ra$. In
addition, we shall add one more term into $H$ representing the final
measurement. Define a $3$-qubit constraint
\[
H^{meas}=(I-\Pi_{out})\otimes |1\ra\la 1|_{cl(L)}.
\]
Here $\Pi_{out}$ is the projector used by $V'$  to decide whether he
accepts the witness (say, $\Pi_{out}$ projects the first data qubit
onto the state $|0\ra$). Note that $\calH$ contains a vector
satisfying $\Pi^{meas}$ iff the verifier $V'$ accepts some witness
state $|\psi^{wit}\ra$ with probability $1$. Define a system of
constraints \be \calC=\{ H^{init}_k, H^{prop}_j,
H^{clock}_l,H^{meas}\} \ee i.e. the system  including all the
constraints defined above. Since all ancillary states $|\phi_k\ra$
are either $|0\ra$ or $|+\ra$,  the off-diagonal matrix elements of
the operators $H^{init}_k$ are either $0$ or $-1/2$, see
Eq.~(\ref{Hinit}). Furthermore, since all gates $U_j$ are classical
Toffoli gates, the off-diagonal matrix elements of the operators
$H^{prop}_j$ are either $0$ or $-1/2$, see Eq.~(\ref{Hprop}).
 Finally, the operators  $H^{clock}$ and $H^{meas}$ are diagonal.
Thus $\calC$ is a stoquastic system of $(n,6)$-constraints where $n=N+L+1$.
By definition, the unsat-value of $\calC$ coincides with the smallest eigenvalue of
a clock Hamiltonian
\be
\label{clock2}
H'=H+ H^{meas},
\ee
where $H$ is defined in Eq.~(\ref{clock1}).
If  $V'$ accepts some witness state $|\psi^{wit}\ra$ with probability $1$ then
$\unsat{\calC}=0$. On the other hand, if $V'$ accepts any witness state with
probability at most $1-\epsilon$,
the derivation of~\cite{KSV:computation}
implies that the smallest eigenvalue of the Hamiltonian Eq.~(\ref{clock2})
can be bounded as
\[
\lambda_{min}(H')\ge c(1-\sqrt{1-\epsilon})L^{-3}\ge 1/poly(N).
\]
where $c$ is some positive constant. Thus in the latter case $\unsat{\calC}\ge 1/poly(N)$.
It follows that any problem in MA is reducible to the stoquastic $6$-SAT problem.

Finally,
we remark that any constraint from the system $\calC$ can be represented as
$I-\Pi$ where $\Pi$ is a non-negative projector with
matrix elements  belonging to the set $\{0,1,1/2\}$.

\noindent {\bf Coherent probabilistic computation.} The Hamiltonian
mentioned in Section~\ref{subs:coherent} is the clock Hamiltonian
defined in Eq.~(\ref{clock1}). The proof that the ground-state of
$H$ has the desired properties is completely analogous to the proof
of Proposition~\ref{prop:adiabatic1}.



\bibliographystyle{hunsrt}

\end{document}